\documentclass[11pt]{article}
\usepackage{array,amsmath,amssymb,amsthm,color,vmargin,overpic,amstext,mathtools,bbm,hyperref,tikz}
\hypersetup{
	colorlinks=true,
	linkcolor=blue,
	citecolor=blue,
	urlcolor=blue
}
\usepackage{amsfonts,amscd,bm,float}
\usepackage{tikz,ytableau,caption,subfig,tabularx}
\usetikzlibrary{positioning}
\usepackage{graphicx}

\usepackage[T1]{fontenc}
\usepackage[utf8]{inputenc}
\usepackage[english]{babel}

\usepackage{amsmath,amssymb,amsfonts,amsthm,mathtools}
\usepackage{mathrsfs}

\usepackage[square,numbers,sort&compress]{natbib}
\usepackage{xcolor}
\usepackage{caption}
\usepackage{tikz}
\usetikzlibrary{arrows.meta,calc}

\newtheorem{Theorem}{Theorem}[section]
\newtheorem{Corollary}[Theorem]{Corollary}
\newtheorem{remark}[Theorem]{Remark}
\newtheorem{prop}[Theorem]{Proposition}
\newtheorem{lemma}[Theorem]{Lemma}

\numberwithin{equation}{section}
\title{Higher order derivative moments of CUE characteristic polynomials and the Riemann zeta function}

\author{
  Alexander Grover\thanks{School of Mathematics, University of Bristol, Woodland Road, Bristol, BS8 1TW, UK. Emails: \texttt{s.grover@bristol.ac.uk}, \texttt{f.mezzadri@bristol.ac.uk}} \qquad
  Francesco Mezzadri\footnotemark[1] \qquad
  Nick Simm\thanks{Department of Mathematics, University of Sussex, Falmer, Brighton, BN1 9RH UK. Email: \texttt{n.j.simm@sussex.ac.uk}}
}

\date{}

\begin{document}
\maketitle
\begin{abstract}
We use random matrix theory for the Circular Unitary Ensemble (CUE) to study moments of derivatives of the Riemann zeta function shifted a small distance from the critical line. The corresponding CUE moments are studied in the limit of large matrix size in two regimes: when the spectral parameter is (1) suitably far inside the unit disc, and (2) at a small distance from the unit circle. In case (1), we obtain an asymptotic formula as a combinatorial sum over contingency tables, while in case (2) we obtain a sum over certain determinants with multiplicative coefficients given by Kostka numbers. The latter result is also valid exactly on the unit circle. Then, we consider the analogous problem for mean values of derivatives of the zeta function with suitable shifts. Assuming the Lindel\"of hypothesis, we show that this mean value gives rise to the same sum over contingency tables obtained in the CUE. For sufficiently low-order moments, we establish this result unconditionally.
\end{abstract}

\section{Introduction and results}
The utility of random matrix theory in studying properties of the Riemann zeta function dates back to a famous meeting between Dyson and Montgomery in 1972. Specifically, they identified the equivalence between the pair correlation of zeros of $\zeta(s)$ and those of local eigenvalue statistics of large random matrices \cite{DysonIII,montgomerypair}. This correspondence has since been the subject of extensive numerical and theoretical investigations. Keating and Snaith~\cite{KeatingSnaith} further deepened this connection by showing that mean values of the Riemann zeta function can be modelled by characteristic polynomials of matrices from the Circular Unitary Ensemble (CUE). Soon after, 
in his thesis Hughes \cite{ChrisPhd} obtained an asymptotic formula for the joint integer moments of derivatives of CUE characteristic polynomials and used it to conjecture the corresponding moments of the Riemann zeta function. This led to extensive research on moments of the derivative
\cite{Alvarez25, B19,Forrester2022Joint,Ba19,CRS06,Dehaye2,Dehaye,ForresterPainleve,Winn}. More recently, generalizations to higher order derivatives \cite{jointmoments1,jointmoments2}, relations to Painlev\'{e} equations and non-integer moments of derivatives \cite{AKW22, bothnerwei, AGKW26} have been investigated.

We recall that the CUE is the probability space obtained by equipping the group of $N \times N$ unitary matrices $\mathrm{U}(N)$ with the normalized Haar measure. We define the characteristic polynomial as $\Lambda_{N}(z) = \det(1-zU^{\dagger})$ where $U^{\dagger}$ is the conjugate transpose of $U \in \mathrm{U}(N)$. The joint moments we consider in this paper are defined as
\begin{equation}\label{defmoms}
M_{\mu,\nu}(z,N) := \mathbb{E}\left[\prod_{i=1}^{\ell(\mu)}\Lambda_N^{(\mu_i)}(z)\prod_{j=1}^{\ell(\nu)}\overline{\Lambda_N^{(\nu_j)}(z)}\right],
\end{equation}
where the expectation is taken with respect to Haar measure. Here, $\mu=(\mu_1,\dots,\mu_{\ell(\mu)})$ and $\nu=(\nu_1,\dots,\nu_{\ell(\nu)})$ are finite lists of non-negative integers indexing the joint moment \eqref{defmoms}. The earlier cited works obtain asymptotics of such averages for various choices of $\mu$ and $\nu$, exclusively under the assumption that $|z|=1$. Here, our main results provide an asymptotic formula at points sufficiently far inside the unit disc (Theorem \ref{RMTTheorem}) and another result at a microscopic distance from the unit circle, which also covers the case $|z|=1$ (Theorem \ref{thm:micro2}).

For a finite list $\mu$, we define the weight $|\mu| = \sum_{j}\mu_{j}$ and the factorial $\mu! = \prod_{j}\mu_{j}!$. Let $\mathcal{M}_{\mu,\nu}$ denote the set of $\ell(\mu) \times \ell(\nu)$ matrices over the non-negative integers with row sums $\mu$ and column sums $\nu$. We use the notation $\mathcal{M}_{\mu,.}$ (or $\mathcal{M}_{.,\nu}$) to indicate that the columns (or rows) have no constraints.
\begin{Theorem}\label{RMTTheorem}
Fix $\alpha \in (0,1)$. The following asymptotics are uniform for $z$ varying in the disc $|z|<1-N^{-\alpha}$, as $N \to \infty$
\begin{equation}\label{eqn:jacobi}
    M_{\mu,\nu}(z,N) = \frac{ \mu! \nu! }{(1-|z|^2)^{\ell(\mu)\ell(\nu) + |\mu| +|\nu|}}\sum_{\substack{ Q \in \mathcal{M}_{\mu, \cdot} \\ R \in \mathcal{M}_{\cdot, \nu} }} \prod_{i=1}^{\ell(\mu)}\prod_{j=1}^{\ell(\nu)} p_{Q_{ij}, R_{ij}}(z,\bar z)+ O(e^{-N^{\delta}}),
\end{equation}
for some $\delta>0$, where
\begin{equation}
    p_{n,m}(z,\bar{z}) := \sum_{k=0}^{\min(n,m)} \binom{n}{k}\binom{m}{k} z^{m-k}\bar{z}^{n-k}. \label{poly}
\end{equation}
\end{Theorem}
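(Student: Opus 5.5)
The plan is to reduce $M_{\mu,\nu}(z,N)$ to derivatives of a multi-point average of characteristic polynomials with \emph{distinct} spectral parameters, compute that average exactly, and then differentiate. Set $a=\ell(\mu)$, $b=\ell(\nu)$ and
\[
F_N(x_1,\dots,x_a;y_1,\dots,y_b) := \mathbb{E}\Big[\prod_{i=1}^a \Lambda_N(x_i)\prod_{j=1}^b \overline{\Lambda_N(\bar y_j)}\Big],
\]
which is a polynomial in the $x_i$ and $y_j$. Then
\[
M_{\mu,\nu}(z,N)=\prod_i\partial_{x_i}^{\mu_i}\prod_j\partial_{y_j}^{\nu_j}F_N\Big|_{x_i=z,\ y_j=\bar z}.
\]

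\textbf{Step 1: exact evaluation of $F_N$.} By the dual Cauchy identity,
\[
\prod_i\det(1-x_iU^\dagger)=\sum_\lambda(-1)^{|\lambda|}s_\lambda(x)\,s_{\lambda'}(U^\dagger).
\]
Schur orthogonality on $\mathrm{U}(N)$ gives $\mathbb{E}[s_{\lambda'}(U^\dagger)\overline{s_{\rho'}(U^\dagger)}]=\delta_{\lambda\rho}\mathbbm{1}_{\ell(\lambda')\le N}$. Hence
\[
F_N=\sum_{\lambda:\ \lambda_1\le N,\ \ell(\lambda)\le\min(a,b)} s_\lambda(x)s_\lambda(y).
\]
Without the restriction $\lambda_1\le N$, the Cauchy identity sums this to $\prod_{i,j}(1-x_iy_j)^{-1}$. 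So
\[
F_N=\prod_{i,j}(1-x_iy_j)^{-1}-E_N,
\]
where $E_N$ is the tail over partitions with $\lambda_1>N$.

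\textbf{Step 2: bounding the error uniformly.} Suppose all $|x_i|,|y_j|\le r:=1-\tfrac12N^{-\alpha}$. Each $s_\lambda$ with at most $\min(a,b)$ rows is bounded by $r^{|\lambda|}$ times a polynomial in $|\lambda|$. Summing over $|\lambda|>N$ then gives
\[
|E_N|\le \mathrm{poly}(N)\,r^{2N}\le e^{-N^{1-\alpha}}.
\]
To pass to the derivatives, apply Cauchy's integral formula on polycircles of radius $\tfrac12 N^{-\alpha}$ centred at $(z,\dots,z;\bar z,\dots,\bar z)$. These stay inside the region $|\cdot|\le r$ whenever $|z|<1-N^{-\alpha}$. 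Differentiation costs at most a factor $N^{\alpha(|\mu|+|\nu|)}$, so the derivatives of $E_N$ are $O(e^{-N^{\delta}})$ for any $\delta<1-\alpha$, uniformly in $z$.

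This uniformity is the step I expect to need the most care. The key requirement is a clean bound on $s_\lambda$ in the regime $|x_i|<1$ with a bounded number of rows. Bounding by the number of semistandard tableaux, which is $\le(|\lambda|+1)^{\min(a,b)^2}$, should suffice.

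\textbf{Step 3: differentiating the main term.} Write
\[
G(x,y)=(1-xy)^{-1}=\sum_{n\ge0}x^ny^n.
\]
By Leibniz, $\partial_{x_i}^{\mu_i}$ applied to $\prod_j G(x_i,y_j)$ distributes $\mu_i$ into row entries $Q_{ij}$, with multinomial weight $\mu_i!/\prod_jQ_{ij}!$. Likewise $\partial_{y_j}^{\nu_j}$ distributes $\nu_j$ into column entries $R_{ij}$. This produces
\[
\mu!\nu!\sum_{Q\in\mathcal M_{\mu,\cdot},\,R\in\mathcal M_{\cdot,\nu}}\ \prod_{i,j}\frac{\partial_x^{Q_{ij}}\partial_y^{R_{ij}}G}{Q_{ij}!\,R_{ij}!}.
\]
It then remains to prove the one-variable identity
\[
\sum_n\binom nQ\binom nR x^{n-Q}y^{n-R}=\frac{p_{Q,R}(y,x)}{(1-xy)^{Q+R+1}},
\]
evaluated at $x=z$, $y=\bar z$; this matches the convention $p_{n,m}(z,\bar z)$ with $n=Q$, $m=R$. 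I would prove it by expanding $\binom{n}{Q}\binom nR$ via Vandermonde in the form $\sum_k\binom Qk\binom Rk\binom{n+k}{Q+R}$, or equivalently by checking generating functions in $Q,R$.

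Finally, collecting exponents gives
\[
\sum_{i,j}(Q_{ij}+R_{ij}+1)=|\mu|+|\nu|+\ell(\mu)\ell(\nu),
\]
which yields exactly \eqref{eqn:jacobi}.
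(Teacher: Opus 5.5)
Your proof is correct, and it takes a genuinely different route from the paper for the finite-$N$ input, with a more modest difference in the derivative computation.

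\textbf{Finite-$N$ input.} The paper starts from the determinantal formula \eqref{eqn:NF41} with $K=L$ and takes the limit via the Cauchy determinant. It recovers $K\neq L$ by setting variables to zero, and gets the uniform $O(e^{-N^{\delta}})$ error by combining the kernel estimate $\mathcal{K}_N(r)=\frac{1}{1-r}+O(e^{-N^{1-\alpha}})$ with an external lemma. You instead use the dual Cauchy identity and Schur orthogonality to write the average as the Cauchy sum truncated to $\lambda_1\le N$. This avoids dividing by Vandermonde determinants, treats all $\ell(\mu),\ell(\nu)$ at once, and makes the error an explicit tail that you bound yourself (tableau counting plus Cauchy estimates on polycircles of radius $\frac12N^{-\alpha}$). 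You also get the explicit range $\delta<1-\alpha$.

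\textbf{Derivatives.} The paper rescales $t_i=(z_i-z)/(1-|z|^2)$, $u_j=(w_j-\bar z)/(1-|z|^2)$. This turns the derivatives into coefficient extraction from $\prod_{i,j}[(1-\bar z t_i)(1-zu_j)-t_iu_j]^{-1}$, which it then expands geometrically. Your Leibniz distribution plus $\binom nQ\binom nR=\sum_k\binom Qk\binom Rk\binom{n+k}{Q+R}$ proves the same single-factor formula $\frac{1}{Q!R!}\partial_x^Q\partial_y^R(1-xy)^{-1}=p_{Q,R}(x,y)(1-xy)^{-Q-R-1}$ directly.

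\textbf{Trade-off.} The paper's determinantal starting point is reused verbatim in the microscopic regime (Theorems \ref{micro} and \ref{thm:finiteNmicro}). Yours is more self-contained for the present statement.

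\textbf{Small slips to fix.}
\begin{enumerate}
\item The numerator in your one-variable identity is $\sum_k\binom Qk\binom Rk x^{R-k}y^{Q-k}$. In the paper's argument order this is $p_{Q,R}(x,y)$, not $p_{Q,R}(y,x)$; check $Q=1$, $R=0$, where $\partial_x(1-xy)^{-1}=y(1-xy)^{-2}$. Your final identification with $p_{Q,R}(z,\bar z)$ is still right.
\item The tableau bound $(|\lambda|+1)^{\min(a,b)^2}$ is not valid in general: for $b=1$, $a=3$, $\lambda=(n)$ the count is $\binom{n+2}{2}$. Any polynomial bound suffices, e.g.\ $s_\lambda(1^a)\le(|\lambda|+a)^{\binom a2}$ from the Weyl dimension formula.
\item Because of the polynomial prefactor, what you actually get is $|E_N|=O(e^{-N^{\delta}})$ for $\delta<1-\alpha$, not $|E_N|\le e^{-N^{1-\alpha}}$. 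That weaker form is what you use afterwards anyway.
\end{enumerate}
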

We first comment on some limiting cases of the above result and its relation to the existing literature. The structure of Theorem \ref{RMTTheorem} can be compared to results for the CUE due to Diaconis and Gamburd \cite{diaconis2004random}. To see this, consider the particular case $z=0$ in Theorem \ref{RMTTheorem} where \eqref{eqn:jacobi} simplifies further due to the fact that the only non-zero contribution to $p_{Q_{ij},R_{ij}}(0,0)$ occurs when $k=\min(Q_{ij},R_{ij})$ and this forces $Q=R$. It follows that
\begin{equation}\label{eqn:DGequivalence}
\frac{1}{\mu!\nu!}\lim_{N \rightarrow \infty}M_{\mu,\nu}(0,N) = \sum_{\substack{ Q\in\mathcal{M}_{\mu,\cdot}\\R\in\mathcal{M}_{\cdot,\nu}}} \mathbbm{1}_{Q=R} = N_{\mu\nu}, 
\end{equation}
where $N_{\mu\nu}$ is the number of $\ell(\mu) \times \ell(\nu)$ non-negative integer matrices with row sums $\mu$ and column sums $\nu$. The latter integer valued arrays appeared in \cite{diaconis2004random} where they are referred to as magic squares. In particular, they studied the so-called secular coefficients $\{c^{(N)}_{j}\}_{j=0}^{N}$, defined via the characteristic polynomial as $\Lambda_{N}(z) = \sum_{j=0}^{N}c^{(N)}_{j}z^{j}$. Since $\Lambda^{(j)}_{N}(0) = j!c^{(N)}_{j}$, Theorem \ref{RMTTheorem} gives the relation
\begin{equation}
    \lim_{N \to \infty}\mathbb{E}\left[\prod_{i=1}^{\ell(\mu)}c^{(N)}_{\mu_i}\prod_{j=1}^{\ell(\nu)}\overline{c^{(N)}_{\nu_j}}\right] = N_{\mu\nu},
\end{equation}
which is a limiting case of \cite[Theorem 2]{diaconis2004random}.

In the opposite limit where $z \to 1$, the order of the divergence $\ell(\mu)\ell(\nu)+|\mu|+|\nu|$ in Theorem \ref{RMTTheorem} matches the order of $N$ in the asymptotic results for the moments evaluated at $z=1$, obtained the previous work~\cite{B19,yaccine,jointmoments1,jointmoments2}. Based on results in \cite{SW26} for the first derivative, one expects the order of magnitude of such averages to saturate at the microscopic scaling $z=1-\frac{c}{N}$ for a constant $c$, i.e.\ at a distance from the unit circle on the order of the mean separation between adjacent eigenvalues. Here, one obtains qualitatively different expressions for the leading order term.

Recall that the \textit{Kostka number} $K_{\lambda \mu}$ is the number of semistandard Young tableaux of shape $\lambda$ and content $\mu$. See \cite{Macdonald1995} for background on these coefficients. For this to make sense when computing $K_{\lambda \mu}$, we interpret $\mu$ as a partition, meaning that we arrange the entries of $\mu$ to be weakly decreasing and discard any zero entries. We adopt this convention throughout the paper. For $s \in \mathbb{N}$ and partitions $\lambda$ and $\rho$ with at most $s$ parts, define
\begin{equation}
\alpha_{i}(\lambda)\coloneq \lambda_{i}+s-i,
\qquad
\beta_{j}(\rho)\coloneq \rho_{j}+s-j,
\end{equation}
and
\begin{equation}
\lambda!_{(s)} = \prod_{i=1}^{s}\alpha_{i}(\lambda)!\,, \qquad \rho!_{(s)} = \prod_{j=1}^{s}\beta_{j}(\rho)!\,.
\end{equation}
\begin{Theorem}
\label{thm:micro2}
Let $\mu$ and $\nu$ be lists of non-negative integers with $\ell(\mu)=\ell(\nu)=s$, and define
\begin{equation}
z_{N}=1-\frac{c}{N},
\qquad
\tau\coloneq c+\overline{c}.
\end{equation}
Then as $N \rightarrow \infty$, uniformly in compact subsets of $c \in \mathbb{C}$, we have
\begin{equation}
\label{climit}
\lim_{N \to \infty}\frac{M_{\mu,\nu}(z_{N},N)}{N^{|\mu|+|\nu|+s^{2}}}
=
\mu!\nu!
\sum_{\substack{\lambda\vdash |\mu|,\ \ell(\lambda)\le s\\
\rho\vdash |\nu|,\ \ell(\rho)\le s}}
\frac{K_{\lambda\mu}K_{\rho\nu}}{\lambda!_{(s)}\,\rho!_{(s)}}
\det\!\Big(I_{\alpha_{i}(\lambda)+\beta_{j}(\rho)}(\tau)\Big)_{i,j=1}^{s},
\end{equation}
where
\begin{equation}
I_{r}(\tau)\coloneq \int_{0}^{1}x^{r}e^{-\tau x}\,dx. \label{ir}
\end{equation}
In particular, when $c=0$,
\begin{equation}
\lim_{N \to \infty}\frac{M_{\mu,\nu}(1,N)}{N^{|\mu|+|\nu|+s^{2}}}
=
\mu!\nu!
\sum_{\substack{\lambda\vdash |\mu|,\ \ell(\lambda)\le s\\
\rho\vdash |\nu|,\ \ell(\rho)\le s}}
\frac{K_{\lambda\mu}K_{\rho\nu}}{\lambda!_{(s)}\,\rho!_{(s)}}
\det\!\left(
\frac{1}{\lambda_{i}+\rho_{j}+2s-i-j+1}
\right)_{i,j=1}^{s}. \label{c0limit}
\end{equation}
Furthermore, the right-hand sides in \eqref{climit} and \eqref{c0limit} are strictly positive.
\end{Theorem}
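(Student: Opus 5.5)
We start from the exact identity expressing $M_{\mu,\nu}(z,N)$ through the shifted moments of characteristic polynomials. Write $\Lambda_N^{(\mu_i)}(z)=\partial_{x_i}^{\mu_i}\Lambda_N(x_i)|_{x_i=z}$ and $\overline{\Lambda_N^{(\nu_j)}(z)}=\partial_{y_j}^{\nu_j}\overline{\Lambda_N(y_j)}|_{y_j=z}$. We then use the classical formula (Bump--Gamburd / Conrey et al.) for the ratio-free average $\mathbb{E}\prod_{i=1}^s\Lambda_N(x_i)\prod_{j=1}^s\overline{\Lambda_N(y_j)}$ with $\ell(\mu)=\ell(\nu)=s$. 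By the dual Cauchy identity and Schur orthogonality, this average equals $\sum_{\kappa\subseteq (N^s)}s_\kappa(x)s_\kappa(\bar y)$, where $\kappa$ ranges over partitions with at most $s$ parts and $\kappa_1\le N$. By Cauchy--Binet this sum is
\[
\frac{\det\big(\sum_{k=0}^{N+s-1}x_i^{k}\bar y_j^{k}\big)_{i,j=1}^s}{\Delta(x)\Delta(\bar y)} .
\]

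Rather than differentiating the determinantal ratio directly, I would expand in Schur functions. The derivative of $s_\kappa(x)$ at $x=(z,\dots,z)$ is the key quantity. Using $p$-expansion in monomials, $s_\kappa=\sum_{\lambda}K_{\kappa\eta}m_\eta$, and $\partial_x^\mu x^{\eta}|_{x=z\mathbf 1}=\prod_i\eta_i!/(\eta_i-\mu_i)!\,z^{|\eta|-|\mu|}$. For $z_N=1-c/N$ and $\kappa$ of size $O(N)$, the dominant contribution comes from $\eta$ with large entries, where $\eta_i!/(\eta_i-\mu_i)!\approx\eta_i^{\mu_i}$.

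A cleaner route, which is the one I would commit to, uses $\partial^{\mu}$ applied to $s_\kappa$. The leading-order term as $\kappa=N\cdot(\text{scaled shape})$ is governed by the identity
\[
\frac{1}{\mu!}[t^\mu]\,s_\kappa(z(1+t_1),\dots,z(1+t_s)) .
\]
Expanding $s_\kappa(1+t)$ via the binomial determinant $\det\big(\binom{\kappa_j+s-j}{\,\cdot\,}\big)$ and then expanding in monomial symmetric functions in $t$ produces, to leading order, $\sum_{\lambda\vdash|\mu|}K_{\lambda\mu}\,s_\lambda$-type coefficients. Concretely,
\[
\frac{s_\kappa(1+t)}{s_\kappa(1)}\approx\sum_\lambda \frac{s_\lambda(t)}{\lambda!_{(s)}}\det\big(k_i^{\alpha_j(\lambda)}\big)\cdot\frac{\prod_{j}(s-j)!}{\Delta(k)} ,
\]
where $k_i=\kappa_i+s-i$. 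Extracting the coefficient of $t^\mu$ from $s_\lambda(t)$ gives $K_{\lambda\mu}$.

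Doing the same on the $\bar y$ side and multiplying by $|z|^{2|\kappa|}\approx e^{-\tau k/N}$ leaves a sum over strict $s$-tuples $k_1>\dots>k_s$ in $[0,N+s-1]$ of
\[
\det(k_i^{\alpha_j(\lambda)})\det(k_i^{\beta_j(\rho)})\,e^{-\tau\sum k_i/N} .
\]
The Vandermonde factors cancel exactly: $s_\kappa(1)$ carries $\Delta(k)$ and the leading term carries $1/\Delta(k)$. The sum then becomes a Riemann sum with $k_i=Nx_i$, contributing $N^{|\mu|+|\nu|+s^2}$ after counting powers: the exponents total $\sum\alpha_j+\sum\beta_j+s=|\mu|+|\nu|+s^2$. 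By Andréief's identity (symmetrising over the ordered simplex) the Riemann sum converges to
\[
\det\Big(\int_0^1 x^{\alpha_i+\beta_j}e^{-\tau x}dx\Big) .
\]
This yields \eqref{climit}, and $c=0$ gives \eqref{c0limit} since $I_r(0)=1/(r+1)$.

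The main obstacle is the uniform control of lower-order terms. We must show that the derivative corrections (replacing $\eta!/(\eta-\mu)!$ by powers, subleading Schur terms of lower degree in $\kappa$) and boundary effects of the Riemann sum contribute $o(N^{|\mu|+|\nu|+s^2})$, uniformly for $c$ in compacts. I would handle this by writing everything as a finite linear combination of sums $\sum_k \det(k_i^{a_j})\det(k_i^{b_j})e^{-\tau\sum k/N}$ with $\sum a_j+\sum b_j$ strictly smaller, each bounded by $N^{\sum a+\sum b+s}$ using $|e^{-\tau k/N}|\le e^{|\tau|}$.

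Positivity follows by writing the determinant via Andréief as an integral of $\det(x_i^{\alpha_j})\det(x_i^{\beta_j})e^{-\tau\sum x_i}$; for real $\tau$ this is a positive weight. Strictly, the right side equals
\[
\int_{[0,1]^s}\Big|\sum_\lambda K_{\lambda\mu}\det(x_i^{\alpha_j(\lambda)})/\lambda!_{(s)}\Big|^2 e^{-\tau\sum x}dx\,/s! .
\]
Only when $\mu,\nu$ enter asymmetrically do we instead get a product of two generally different polynomials. In that case I would note that the limit equals $\lim M/N^{\cdot}$ and argue positivity via the Schur positivity of each factor on $(0,1)^s$. Each sum $\sum_\lambda K_{\lambda\mu}\det(x^{\alpha(\lambda)})/\lambda!_{(s)}$ times $1/\Delta(x)$ is a polynomial with nonnegative coefficients that is positive on the ordered simplex, so the integrand is positive there.
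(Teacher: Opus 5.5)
Your argument is correct, but it is organised differently from the paper's, although the same exact identity sits underneath both.

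\textbf{The paper's route.} It first proves the finite-$N$ formula of Theorem~\ref{thm:finiteNmicro}, applying the confluent Kostka lemma (Lemma~\ref{lem:confluentkostka}) twice to $\det(K_{N,s}(z_iw_j))/\Delta(\bm z)\Delta(\bm w)$. This leaves an $s\times s$ determinant of single sums $(w^{a}K_{N,s}^{(a)}(w))^{(b)}$. The asymptotics then reduce to the one-variable estimate $(u^aK^{(a)}_{N,s}(u))^{(b)}\sim N^{a+b+1}I_{a+b}(\tau)$, imported from \cite{SW26}. Positivity requires running Andr\'eief backwards, to turn the determinant into an integral over $[0,1]^s$.

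\textbf{Your route.} You differentiate the Schur expansion $\sum_{\kappa\subseteq(N^s)}s_\kappa(\bm x)s_\kappa(\bar{\bm y})$ term by term. The binomial-determinant expansion $s_\kappa(\mathbf 1+\bm t)=\sum_\lambda \det\binom{k_i}{\alpha_j(\lambda)}\,s_\lambda(\bm t)$ replaces the confluent Kostka lemma. This gives the exact identity
$M_{\mu,\nu}(z,N)=\mu!\nu!\sum_{\kappa}z^{|\kappa|-|\mu|}\bar z^{|\kappa|-|\nu|}\sum_{\lambda,\rho}K_{\lambda\mu}K_{\rho\nu}\det\binom{k_i}{\alpha_j(\lambda)}\det\binom{k_i}{\beta_j(\rho)}$.
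This is exactly Theorem~\ref{thm:finiteNmicro} unfolded by Cauchy--Binet, since $(w^aK_{N,s}^{(a)}(w))^{(b)}=\sum_m m^{\underline a}m^{\underline b}w^{m-b}$ and $m^{\underline a}=a!\binom{m}{a}$.

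\textbf{What each buys.} Your route is self-contained, needing no input from \cite{SW26}. Positivity is essentially free, because the limit first appears as the manifestly nonnegative integral $\frac{1}{s!}\int_{[0,1]^s}\Delta^2(\bm x)\,(\cdots)(\cdots)\,e^{-\tau\sum_j x_j}\,d\bm x$. Andr\'eief is then used only forwards, to reach the stated determinant. The price is an $s$-fold Riemann sum plus the lower-order bookkeeping you describe, which is routine:
\begin{itemize}
\item the corrections are polynomials of strictly lower total degree, and a crude bound over a box with $O(N^s)$ lattice points suffices;
\item $|z_N|^{2|\kappa|}=e^{-\tau|\kappa|/N}(1+O(N^{-1}))$ holds uniformly for $c$ in compacts.
\end{itemize}
The paper's route instead yields a compact finite-$N$ formula and keeps the asymptotic analysis purely one-dimensional.

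\textbf{Small slips to fix.}
\begin{itemize}
\item The correct identity is $\partial^{\mu}_{\bm x}f|_{\bm x=z\mathbf 1}=\mu!\,z^{-|\mu|}[\bm t^\mu]f(z(\mathbf 1+\bm t))$, not $\frac{1}{\mu!}[\bm t^\mu]\cdots$. This $\mu!$ is where the prefactor $\mu!\nu!$ comes from.
\item Your $|\cdot|^2$ formula for the limit holds only when $\nu$ is a rearrangement of $\mu$, and it omits $\mu!\nu!$. Your general Schur-positivity argument does cover every case.
\item No restriction to real $\tau$ is needed, since $\tau=2\,\mathrm{Re}\,c$ is always real.
\end{itemize}
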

As we mentioned earlier, apart from \cite{SW26}, all known asymptotic results for $M_{\mu,\nu}(z,N)$ are for $|z|=1$, which corresponds to the special case $c=0$ in Theorem \ref{thm:micro2}. This was studied in \cite{yaccine} for $\mu=\nu$, where a complicated multiple integral is given for the leading term in the asymptotics. More recently, the works \cite{jointmoments1,jointmoments2}, again for $c=0$, established the asymptotics \eqref{c0limit} for the particular case (in multiplicity form) $\mu=\nu = (n_{1}^{M},n_{2}^{k-M})$, though their leading term is given by a different combinatorial formula. The structure of Theorem \ref{thm:micro2} can be regarded as a natural generalisation of \cite[Theorem 1.6]{SW26}, which corresponds to the first derivative. While being more general, the proof we give is conceptually more straightforward. Regarding the strict positivity in Theorem \ref{thm:micro2}, we emphasise that this property is not always easy to establish, see \cite{CRS06} where this issue is discussed. Without it, one would not be sure that the power of $N$ in \eqref{climit} and \eqref{c0limit} gives the true order of the asymptotics. Finally, we remark that a simultaneous work \cite{AAKP} independently found the connection between higher order derivatives and Kostka numbers underpinning Theorem \ref{thm:micro2}. 

We now discuss a version of Theorem \ref{RMTTheorem} in the context of mean values of the Riemann zeta function. Their study began with the work of Hardy and Littlewood \cite{hardylittlewood}, who showed
\begin{equation}\label{eqn:hardylittle}
\frac{1}{T}\int_{1}^T |\zeta(1/2+it)|^{2}\,dt \sim \log T  \quad \text{as} \quad T \rightarrow \infty.
\end{equation}
Higher-order moments of the form 
\begin{equation}
\frac{1}{T} \int_1^T |\zeta(1/2 + it)|^{2s} \, dt \sim a_s b_s (\log T)^{s^2} \quad \text{as} \quad T \rightarrow \infty, \label{highermomsks}
\end{equation}
were later studied in a rich history starting with Ingham \cite{ingham} who showed $a_2b_2=\frac{1}{2\pi^2}$. Conjectures and bounds were subsequently made \cite{conrey1984mean,conrey2001high} for the third and fourth moments ($b_3=\frac{42}{9!}$ and $b_4=\frac{24024}{16!}$), where the arithmetic contribution to \eqref{highermomsks} $a_s$ was known
\begin{equation}\label{eqn:nonuniversalpart}
a_s = \prod_{p} \left\{\left(1 - \frac{1}{p}\right)^{s^2}\left(\sum_{m=0}^{\infty}\left(\frac{\Gamma(s+m)}{m!\,\Gamma(s)}\right)^2 p^{-m}\right)\right\}.
\end{equation}
It was not until the aforementioned paper of Keating and Snaith \cite{KeatingSnaith} that the complete structure of $b_s$ was conjectured to arise from moments of large-$N$ CUE random matrices. They predicted
\begin{equation}\label{eqn:Barnesg}
b_s=\lim_{N \rightarrow \infty }N^{-s^2}\mathbb{E}\left[\left|\Lambda_N\left(1\right)\right|^{2s}\right]=\frac{G^2(s+1)}{G(2s+1)},
\end{equation}
where $G$ is the Barnes G-function. Here the moments are evaluated at $z=1$, since by rotational invariance of the CUE, it is equivalent to taking any point on the unit circle. They also provided a complete finite $N$ description of the above moments in the CUE. We remark that to our knowledge, there are no number-theoretical results establishing \eqref{highermomsks} for $s > 2$.

When comparing to the random matrix results with $|z|<1$ fixed, the idea is to consider moments of the Riemann zeta function shifted a fixed distance from the critical line. These are much more tractable analytically than the conjectured asymptotics \eqref{highermomsks}. After the limit $T \to \infty$ is taken for a fixed $\sigma>\frac{1}{2}$, we then let $\sigma \to \frac{1}{2}$ from above. We want to compare this with Theorem \ref{RMTTheorem} as $z \to 1$. Applying Theorem \ref{RMTTheorem} and the binomial sum
    \begin{equation}
        p_{n,m}(1,1) := \sum_{k=0}^{\min(n,m)}\binom{n}{k}\binom{m}{k} = \binom{n+m}{m}, \label{binsum}
    \end{equation}
we obtain, as $|z| \to 1$,
\begin{equation}
\lim_{N \rightarrow \infty}M_{\mu,\nu}(z,N)\sim (1-|z|^2)^{-\ell(\mu)\ell(\nu)-|\mu|-|\nu|}\mu!\nu!\sum_{\substack{ Q\in\mathcal{M}_{\mu,\cdot}\\R\in\mathcal{M}_{\cdot,\nu}}}\prod_{i=1}^{\ell(\mu)}\prod_{j=1}^{\ell(\nu)} \binom{Q_{ij}+R_{ij}}{Q_{ij}}. \label{leading}
\end{equation}
In the next result, assuming the Lindel\"of hypothesis, we show that the above random matrix contribution arises for an appropriate mean value analogue of \eqref{defmoms}.

\begin{prop}\label{thm:ZetaThmLindel}
Assuming the Lindel\"of hypothesis, we have the following asymptotic relation as $\sigma \to \frac{1}{2}$ from above:
\begin{equation}\label{eqn:ZetaMoments}
\lim_{T\rightarrow \infty} \frac{1}{T} \int_{1}^{T}
\prod_{j=1}^{\ell(\mu)}\zeta^{(\mu_j)}(\sigma+it)
\prod_{k=1}^{\ell(\nu)} \overline{\zeta^{(\nu_k)}(\sigma+it)}\,dt
\;\sim\;
\frac{(-1)^{|\mu|+|\nu|}a_{\mu,\nu} h_{\mu,\nu
}}{
  (2\sigma - 1)^{\ell(\mu)\ell(\nu)+|\mu|+|\nu|}
},
\end{equation}
where the arithmetic factor
\begin{equation}\label{eqn:generalarithmetic}
a_{\mu,\nu}=\prod_p \left\{(1-p^{-1})^{\ell(\mu)\ell(\nu)} \sum_{m=0}^{\infty}\left(\frac{\Gamma(\ell(\mu)+m)}{m!\,\Gamma(\ell(\mu))}\right)\left(\frac{\Gamma(\ell(\nu)+m)}{m!\,\Gamma(\ell(\nu))}\right)p^{-m}\right\},
\end{equation}
and the universal factor 
\begin{equation}\label{eqn:Hmunu}
h_{\mu,\nu}=\mu!\nu!\sum_{\substack{ Q\in\mathcal{M}_{\mu,\cdot}\\R\in\mathcal{M}_{\cdot,\nu}}}\prod_{i=1}^{\ell(\mu)}\prod_{j=1}^{\ell(\nu)}\binom{R_{ij}+Q_{ij}}{Q_{ij}}.
\end{equation}
Furthermore, if $\mu$ and $\nu$ satisfy $\ell(\mu), \, \ell(\nu) \leq 2$, the above asymptotics are true unconditionally, i.e.\ without assuming the Lindel\"of hypothesis.
\end{prop}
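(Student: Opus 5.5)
Our plan is to reduce the mean value to a mean-square (Parseval/Carlson-type) identity for Dirichlet series in the half-plane $\sigma>\frac12$, and then extract the singular behaviour as $\sigma\to\frac12^+$.

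\textbf{Step 1: the Dirichlet coefficients.} Write $k=\ell(\mu)$ and $l=\ell(\nu)$. For $\operatorname{Re} s>1$,
\[
F(s):=\prod_{j=1}^{k}\zeta^{(\mu_j)}(s)=(-1)^{|\mu|}\sum_{n\ge1}\frac{a(n)}{n^{s}},
\qquad
a(n)=\sum_{n_1\cdots n_k=n}\prod_j(\log n_j)^{\mu_j}.
\]
The analogous product $G(s)$ for $\nu$ has coefficients $b(n)$ and sign $(-1)^{|\nu|}$.

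\textbf{Step 2: the mean value identity.} By Carlson's theorem, if $F$ has finite order and $\frac1T\int_1^T|F(\sigma+it)|^2\,dt$ stays bounded for every $\sigma>\frac12$, then $F(\sigma+it)=\sum_n a(n)n^{-\sigma-it}$ in mean square. Bilinear Parseval then gives
\[
\lim_{T\to\infty}\frac1T\int_1^T F(\sigma+it)\overline{G(\sigma+it)}\,dt=(-1)^{|\mu|+|\nu|}\sum_{n}\frac{a(n)b(n)}{n^{2\sigma}}.
\]
Under Lindelöf, each $\zeta^{(m)}(\sigma+it)\ll t^{\epsilon}$ for $\sigma\ge\frac12$ (by Cauchy's estimate on small circles). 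The standard argument, namely an approximate Dirichlet polynomial of length $T^{\epsilon}$ combined with a Perron shift, then establishes the needed mean-square bound for every $\sigma>\frac12$, with Cauchy–Schwarz handling the cross terms.

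For the unconditional case with $k,l\le2$ we use the known bounds instead. The fourth-moment bound $\int_1^T|\zeta(\sigma+it)|^4\,dt\ll T$ for $\sigma>\frac12$ transfers to derivatives via Cauchy's integral formula over circles of radius $\ll\sigma-\frac12$. This gives mean-square bounds for $\zeta^{(a)}\zeta^{(b)}$, and Cauchy–Schwarz handles the cross term $F\bar G$. This step is where the restriction $\ell\le2$ enters, and we expect justifying the Parseval identity to be the main analytic obstacle.

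\textbf{Step 3: the singularity as $\sigma\to\frac12$.} This is the combinatorial core. Introduce shifts and set
\[
D(\mathbf{x},\mathbf{y};w)=\sum_n\frac{d_{\mathbf{x}}(n)\,d_{\mathbf{y}}(n)}{n^{w}},
\qquad
d_{\mathbf{x}}(n)=\sum_{n_1\cdots n_k=n}\prod_j n_j^{-x_j},
\]
and similarly for $d_{\mathbf{y}}$. Our series equals $(-1)^{|\mu|+|\nu|}\partial_{\mathbf{x}}^{\mu}\partial_{\mathbf{y}}^{\nu}D$ evaluated at $\mathbf{x}=\mathbf{y}=0$ and $w=2\sigma$. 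Euler-product factorisation gives
\[
D=\prod_{i,j}\zeta(w+x_i+y_j)\,A(\mathbf{x},\mathbf{y};w),
\]
with $A$ holomorphic near $w=1$, $\mathbf{x}=\mathbf{y}=0$, and $A(0,0;1)=a_{\mu,\nu}$. Near the pole $\zeta(w+x_i+y_j)\approx(w-1+x_i+y_j)^{-1}$. The most singular term as $w\to1$ arises when all derivatives fall on the product of these poles, with $A$ replaced by $a_{\mu,\nu}$. Any derivative falling on $A$ or on the regular part of $\zeta$ lowers the order. Write $\epsilon=2\sigma-1$. It remains to compute
\[
\partial_{\mathbf{x}}^{\mu}\partial_{\mathbf{y}}^{\nu}\prod_{i,j}\frac{1}{\epsilon+x_i+y_j}\Big|_{0}.
\]

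\textbf{Step 4: matching with $h_{\mu,\nu}$.} Use Taylor coefficients. The quantity above equals $\mu!\nu!$ times the coefficient of $\mathbf{x}^{\mu}\mathbf{y}^{\nu}$ in $\prod_{i,j}\epsilon^{-1}(1+(x_i+y_j)/\epsilon)^{-1}$. Expanding each factor,
\[
(1+(x_i+y_j)/\epsilon)^{-1}=\sum_{q,r}(-1)^{q+r}\binom{q+r}{q}x_i^{q}y_j^{r}\epsilon^{-q-r}.
\]
Choosing $Q_{ij}=q$ and $R_{ij}=r$ with row sums of $Q$ equal to $\mu$ and column sums of $R$ equal to $\nu$ gives the total
\[
(-1)^{|\mu|+|\nu|}\,\epsilon^{-kl-|\mu|-|\nu|}\,\mu!\nu!\sum_{Q,R}\prod_{i,j}\binom{Q_{ij}+R_{ij}}{Q_{ij}}
=(-1)^{|\mu|+|\nu|}\,\epsilon^{-kl-|\mu|-|\nu|}\,h_{\mu,\nu}.
\]
Combining this with the sign from Step 2 yields the stated asymptotics \eqref{eqn:ZetaMoments}.
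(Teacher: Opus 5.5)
Your proposal is correct. For the conditional part it is essentially the paper's argument. The paper also runs Carlson's method (Titchmarsh, proof of Theorem 7.9) under LH to reach $(-1)^{|\mu|+|\nu|}\sum_m A_\mu(m)A_\nu(m)m^{-2\sigma}$. It then uses the Euler-product factorisation $H_{\mu,\nu}\prod_{i,j}\zeta(2\sigma+\alpha_i+\beta_j)$ with $H_{\mu,\nu}(\bm 0,\bm 0,\sigma)\to a_{\mu,\nu}$, and lets all derivatives fall on the zeta product. The only cosmetic difference is that the paper Taylor-expands about $2\sigma$ and inserts $\zeta^{(k)}(2\sigma)\sim(-1)^k k!(2\sigma-1)^{-k-1}$, while you replace each factor by its polar part first; the contingency-table bookkeeping is identical.

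The genuine difference is the unconditional case $\ell(\mu),\ell(\nu)\le 2$. The paper handles $K=L=2$ by redoing the Titchmarsh Theorem 7.5 analysis with shifts. It uses the approximate functional equation with $x=y=\sqrt{t/2\pi}$, differentiated via Cauchy's theorem. The diagonal $n_1n_2=n_3n_4$ gives $\prod_{i,j}\zeta(2\sigma+\alpha_i+\beta_j)/\zeta(4\sigma+\alpha_1+\alpha_2+\beta_1+\beta_2)$, and the off-diagonal terms are bounded as in Titchmarsh. The cases $K=1$ or $L=1$ are quoted from the mean-value theorem for Dirichlet series.

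You instead keep Carlson's theorem throughout. You only need to verify its hypothesis, a bounded mean square for $\zeta^{(a)}\zeta^{(b)}$, which you get from the known fourth-moment upper bound via Cauchy's formula and Cauchy--Schwarz. This buys uniformity and economy:
\begin{itemize}
\item the unconditional case lands on exactly the same Dirichlet series as the conditional one (for $K=L=2$ the factor $1/\zeta(4\sigma+\cdots)$ is just your $A$), so your Steps 3--4 apply verbatim;
\item the mixed cases $K\ne L$ are covered by the same argument;
\item only an upper bound for the fourth moment is used, not its full asymptotic with shifts and derivatives.
\end{itemize}
In exchange, the paper's route produces the closed form of $\mathcal{G}$ directly without appealing to Carlson's theorem.

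One point you should state explicitly. Circles of radius $r<\sigma-\tfrac12$ move the abscissa, so the transfer to derivatives needs $\int_1^T|\zeta(\sigma'+it)|^4\,dt\ll T$ uniformly for $\sigma'$ in a compact subinterval of $(\tfrac12,\infty)$. This is standard, from the proof of Titchmarsh's Theorem 7.5 or the convexity theorem for mean values.
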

When $\mu = \nu = (k,k)$, we give a closed form expression for the leading term in the asymptotics which we establish unconditionally.
\begin{Theorem}\label{thm:ZetaThms2}
Let $k$ be a positive integer. Then as $\sigma \rightarrow \frac{1}{2}$, we have
\begin{equation}\label{eqn:zetas2}
\lim_{T \to \infty} \frac{1}{T}
\int_{1}^{T} \bigl|\zeta^{(k)}(\sigma + i t)\bigr|^{4} \, dt
\;\sim\;
\frac{6}{\pi^2}\frac{h_{k
}}{
  (2\sigma - 1)^{4(k+1)}
},
\end{equation}
where
\begin{equation}\label{eqn:hk}
h_{k}=
  (k!)^4(k+1)\left(\binom{4k+3}{2k+2}
  -2 \binom{2 k+1}{k}^2\right).
\end{equation}
\end{Theorem}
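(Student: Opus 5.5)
The plan is to obtain Theorem \ref{thm:ZetaThms2} as the special case $\mu=\nu=(k,k)$ of Proposition \ref{thm:ZetaThmLindel}. Since $\ell(\mu)=\ell(\nu)=2$, that proposition holds unconditionally in this case. Because $|\zeta^{(k)}|^4=\zeta^{(k)}\zeta^{(k)}\overline{\zeta^{(k)}}\,\overline{\zeta^{(k)}}$, the left side of \eqref{eqn:zetas2} is exactly the left side of \eqref{eqn:ZetaMoments}. The exponent there is $\ell(\mu)\ell(\nu)+|\mu|+|\nu|=4+4k$, and the sign is $(-1)^{4k}=1$. It remains to identify the arithmetic factor and the universal factor.

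\emph{Arithmetic factor.} With $\ell(\mu)=\ell(\nu)=2$, each Euler factor in \eqref{eqn:generalarithmetic} is
\[
(1-p^{-1})^4\sum_{m\ge0}(m+1)^2p^{-m}=(1-p^{-1})^4\,\frac{1+p^{-1}}{(1-p^{-1})^3}=1-p^{-2}.
\]
Hence $a_{\mu,\nu}=\prod_p(1-p^{-2})=1/\zeta(2)=6/\pi^2$.

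\emph{Universal factor.} I parametrise $Q\in\mathcal M_{(k,k),\cdot}$ as $\begin{pmatrix}a&k-a\\ b&k-b\end{pmatrix}$ and $R\in\mathcal M_{\cdot,(k,k)}$ as $\begin{pmatrix}c&d\\ k-c&k-d\end{pmatrix}$, with $a,b,c,d\in\{0,\dots,k\}$. Then $h_{\mu,\nu}/(k!)^4$ equals
\[
\sum_{a,b,c,d}\binom{a+c}{a}\binom{k-a+d}{d}\binom{b+k-c}{b}\binom{2k-b-d}{k-b}.
\]
The sums over $a$ and $b$ decouple, and each is evaluated by the Chu--Vandermonde-type identity
\[
\sum_{a=0}^{k}\binom{a+x}{x}\binom{k-a+y}{y}=\binom{k+x+y+1}{k},
\]
which follows by comparing coefficients in $(1-t)^{-x-1}(1-t)^{-y-1}=(1-t)^{-x-y-2}$. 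This gives
\[
\frac{h_{\mu,\nu}}{(k!)^4}=\sum_{c,d=0}^{k}\binom{k+c+d+1}{k}\binom{3k-c-d+1}{k}=\sum_{s=0}^{2k}m_s\binom{k+s+1}{k}\binom{3k+1-s}{k},
\]
where $m_s=\min(s,2k-s)+1$ counts the pairs $(c,d)\in[0,k]^2$ with $c+d=s$.

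\emph{Closed form (main obstacle).} The last step is to show this single sum equals $(k+1)\big(\binom{4k+3}{2k+2}-2\binom{2k+1}{k}^2\big)$. I would write $m_s=(k+1)-|s-k|$ and use the symmetry $s\mapsto 2k-s$ of the summand.
\begin{itemize}
\item The piece $(k+1)\sum_{s=0}^{2k}\binom{k+s+1}{k}\binom{3k+1-s}{k}$ is a truncated Vandermonde convolution in the upper indices, $\sum_j\binom{j}{k}\binom{n-j}{k}=\binom{n+1}{2k+1}$ with $n=4k+2$. Extending the $s$-range to all $j$ adds boundary terms of the form $\binom{j}{k}$ with $k\le j\le 2k$ paired against large binomials. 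These should produce the $\binom{4k+3}{2k+2}$ term plus corrections.
\item The weighted piece $\sum_s|s-k|\binom{k+s+1}{k}\binom{3k+1-s}{k}$ I would handle by writing $(s+k+1)\binom{k+s+1}{k}=(k+1)\binom{k+s+2}{k+1}\cdot\frac{s+k+1}{s+2}$, or more cleanly by an absorption identity $j\binom{j}{k}=(k+1)\binom{j+1}{k+1}-\binom{j}{k}$. This reduces the piece to half-range partial convolutions, whose evaluation at the midpoint should give the $\binom{2k+1}{k}^2$ terms.
\end{itemize}
If the bookkeeping becomes unwieldy, the fallback is the WZ/Zeilberger method. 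Both sides satisfy a first-order recurrence in $k$, which I would certify and then confirm against the small cases $k=0$ (value $4$) and $k=1$ (value $2(35-18)=34$). This evaluation is where I expect the real work to lie; the analytic input is entirely supplied by Proposition \ref{thm:ZetaThmLindel}.
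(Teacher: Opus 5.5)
Your proposal follows the paper's route: the theorem is the case $\mu=\nu=(k,k)$ of Proposition~\ref{thm:ZetaThmLindel}, unconditional here by Lemma~\ref{lemma2} (whose factor $1/\zeta(4\sigma+\alpha_1+\alpha_2+\beta_1+\beta_2)$ gives the same $6/\pi^2$ as your Euler-product computation $a_{\mu,\nu}=\prod_p(1-p^{-2})$), and the contingency-table sum is collapsed by two negative-binomial convolutions to $\sum_{c,d=0}^{k}\binom{k+c+d+1}{k}\binom{3k-c-d+1}{k}$ as in Lemma~\ref{lem:len2}, where the paper states the final closed form without derivation. Your split $m_s=(k+1)-|s-k|$ does complete that step. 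Write $A_s=\binom{k+s+1}{k}\binom{3k+1-s}{k}$, $j=k+s+1$ and $N=4k+2$. The unweighted part is $(k+1)\bigl(\binom{4k+3}{2k+2}-2\binom{3k+2}{k}\bigr)$, because only the endpoint terms $j=k$ and $j=3k+2$ of the full Vandermonde sum are missing. The identity $(s-k)A_s=\tfrac{k+1}{2}\bigl(f(j)-f(j+1)\bigr)$ with $f(j)=\binom{j}{k+1}\binom{N+1-j}{k+1}$ telescopes the weighted part to $(k+1)\bigl(2\binom{2k+1}{k}^2-\binom{3k+2}{k+1}\bigr)$, and $\binom{3k+2}{k+1}=2\binom{3k+2}{k}$ cancels the leftovers. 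Two small corrections: the $k=0$ value is $1$, not $4$; and the closed form is a sum of two non-similar hypergeometric terms, so your Zeilberger fallback could not produce a first-order recurrence.
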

Finally, it is reasonable to conjecture that the leading coefficients presented in Theorem \ref{thm:micro2} also arise from appropriately defined mean values of $\zeta$. Specifically, one expects that if the shifts on the left-hand side of \eqref{eqn:ZetaMoments} are taken to be $\sigma = \frac{1}{2}+\frac{c}{\log(T)}$, then the leading order asymptotics will be given by the random matrix result in \eqref{climit}, replacing $N$ with $\log(T)$ and multiplying by the arithmetic factor \eqref{eqn:generalarithmetic}.
\section*{Acknowledgements} 
We thank Nina Snaith and the Heilbronn Institute for Mathematical Research (HIMR) for organising, and respectively funding, the Focused Research Workshop on \textit{Averages of Derivatives of Characteristic Polynomials of Random Unitary Matrices} at which this problem was proposed. The format of this workshop made this collaboration possible. A.~G.\ is grateful to Fei Wei and Andrew Pearce-Crump for helpful conversations surrounding Section~\ref{section3}. We thank Fei Wei for pointing out the strict positivity property of Theorem \ref{thm:micro2}. We are grateful to the authors of \cite{AAKP} for sharing their preprint with us. A.~G.\ acknowledges support from UKRI grant no. 2765681. N.~S. is grateful for support from the Royal Society, grant URF\textbackslash R\textbackslash231028.

\section{Moments of higher order derivatives in the CUE}\label{section2}
In this section we prove Theorems \ref{RMTTheorem} and \ref{thm:micro2}. We first set some notation that we make use of throughout the paper. We abbreviate the lengths of $\mu$ and $\nu$ in \eqref{defmoms} as $K := \ell(\mu)$ and $L := \ell(\nu)$. For the list $\mu$ we work with vectors $\bm{z} = (z_1,\ldots,z_{K}) \in \mathbb{C}^{K}$ and we make use of the differential operators $\mathcal{D}^{\mu}_{\bm{z}} :=   
\prod_{i=1}^{K} \frac{\partial^{\mu_i}}{\partial z_i^{\mu_i}}$. The notation $\mathcal{D}^{\mu}_{\bm{z}}f(\bm z)|_{\bm z = z}$ means that we take the limit $z_{j} \to z$ for all $j$ after applying $\mathcal{D}^{\mu}_{\bm{z}}$. Hence, from \eqref{defmoms}, we have
\begin{equation}
M_{\mu,\nu}(z,N) = \mathcal{D}^{\nu}_{\bm w}\mathcal{D}^{\mu}_{\bm z}\mathbb{E}\bigg[\prod_{i=1}^{K}\Lambda_{N}(z_i)\prod_{j=1}^{L}\overline{\Lambda_{N}(\overline{w_j})}\bigg]\bigg{|}_{\bm z = z, \bm w = \bar z}.
\end{equation}
\begin{lemma}
\label{lem1}
Fix $\alpha \in (0,1)$. The following asymptotics are uniform for $z$ varying in the disc $|z|<1-N^{-\alpha}$, as $N \to \infty$,
\begin{equation}
    M_{\mu,\nu}(z,N) = \mathcal{D}^{\nu}_{\bm w}\mathcal{D}^{\mu}_{\bm z}\prod_{i=1}^{K}\prod_{j=1}^{L}\frac{1}{1-z_iw_j}\Bigg|_{\bm z =z, \bm w = \bar z}+ O(e^{-N^{\delta}}), \label{meso}
\end{equation}
for some $\delta>0$.
\end{lemma}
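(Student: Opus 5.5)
Our plan is to compute the mixed average $F_N(\bm z,\bm w):=\mathbb{E}\big[\prod_{i=1}^{K}\Lambda_N(z_i)\prod_{j=1}^{L}\overline{\Lambda_N(\overline{w_j})}\big]$ exactly for finite $N$ as a truncated power series, compare it with $\prod_{i,j}(1-z_iw_j)^{-1}$, and only then differentiate. The natural tool is the dual Cauchy identity. Write $\Lambda_N(z)=\det(1-zU^\dagger)$ and $\overline{\Lambda_N(\bar w)}=\det(1-wU)$. Then
\[
\prod_{i=1}^K\det(1-z_iU^\dagger)=\sum_{\lambda}(-1)^{|\lambda|}s_{\lambda'}(\bm z)\,s_\lambda(\overline{U}),\qquad
\prod_{j=1}^L\det(1-w_jU)=\sum_{\rho}(-1)^{|\rho|}s_{\rho'}(\bm w)\,s_\rho(U),
\]
where $\lambda\subseteq (N^K)$ and $\rho\subseteq(N^L)$ range over partitions contained in rectangles. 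Schur orthogonality on $\mathrm{U}(N)$ gives $\mathbb{E}[s_\lambda(U)\overline{s_\rho(U)}]=\delta_{\lambda\rho}$ for $\ell(\lambda)\le N$. Hence
\[
F_N(\bm z,\bm w)=\sum_{\lambda\subseteq (N^{\min(K,L)})} s_{\lambda'}(\bm z)s_{\lambda'}(\bm w)
=\sum_{\kappa:\ \ell(\kappa)\le\min(K,L),\ \kappa_1\le N} s_\kappa(\bm z)s_\kappa(\bm w).
\]
By the Cauchy identity, the same sum without the constraint $\kappa_1\le N$ equals $\prod_{i,j}(1-z_iw_j)^{-1}$ for $|z_i|,|w_j|<1$, because $s_\kappa(\bm z)=0$ when $\ell(\kappa)>K$. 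The error is therefore the tail
\[
E_N(\bm z,\bm w)=\sum_{\kappa_1>N}s_\kappa(\bm z)s_\kappa(\bm w).
\]

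Next we bound this tail together with its derivatives. Take the polydiscs $|z_i|,|w_j|\le r:=1-\tfrac12N^{-\alpha}$, which contain a neighbourhood of the evaluation point $\bm z=z$, $\bm w=\bar z$ once $|z|<1-N^{-\alpha}$. The bound $|s_\kappa(\bm z)|\le s_\kappa(r,\dots,r)=r^{|\kappa|}s_\kappa(1^K)$ holds, and $s_\kappa(1^K)$ is polynomial in $|\kappa|$ of degree at most $K(K-1)/2$. This gives
\[
|E_N|\le\sum_{n>N}(\#\{\kappa\vdash n,\ \ell(\kappa)\le \min(K,L)\})\,n^{C}r^{2n}\le \sum_{n>N}n^{C'}e^{-nN^{-\alpha}}\ll e^{-N^{1-\alpha}/2},
\]
with constants depending only on $K$ and $L$.

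The derivatives $\mathcal{D}^\nu_{\bm w}\mathcal{D}^\mu_{\bm z}$ are then controlled by Cauchy's integral formula on polydiscs of radius $\tfrac14N^{-\alpha}$ around the evaluation point. These polydiscs stay inside $|\cdot|\le r$, and the formula costs a factor of at most $N^{\alpha(|\mu|+|\nu|)}\mu!\nu!\,4^{|\mu|+|\nu|}$. This polynomial factor is absorbed by the exponential, giving the error $O(e^{-N^\delta})$ for any $\delta<1-\alpha$, uniformly in $z$.

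The main obstacle is the exact finite-$N$ identity, specifically getting the truncation right. We must check two things: that the partitions appearing in the dual Cauchy expansion are exactly those with $\lambda\subseteq(N^K)$, and that orthogonality requires only $\ell(\lambda)\le N$, which holds automatically. Transposing then converts the length restriction into the condition $\kappa_1\le N$. The crude counting needed for the tail is standard once this truncation is established.
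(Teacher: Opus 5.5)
Your proof is correct. It follows a different route from the paper, and the real difference lies in how the error is controlled.

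\textbf{The paper's route.} The paper starts from the determinantal formula \eqref{eqn:NF41}: $\det\{\mathcal{K}_N(z_iw_j)\}$ divided by two Vandermonde determinants, for $K=L$. It identifies the $N\to\infty$ limit with $\prod_{i,j}(1-z_iw_j)^{-1}$ via the Cauchy determinant identity. It treats $K\neq L$ by setting the surplus variables to zero using $\Lambda_N(0)=1$. It moves the derivatives through the limit with Cauchy's integral formula. The uniform statement for $|z|<1-N^{-\alpha}$ is only sketched: it combines $\mathcal{K}_N(z)=\frac{1}{1-z}+O(e^{-N^{1-\alpha}})$ with an external lemma that handles division by the Vandermondes as the points coalesce.

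\textbf{Your route.} Your exact identity is in fact the same one. Expanding $\det\{\mathcal{K}_N(z_iw_j)\}$ by Cauchy--Binet gives precisely $\Delta(\bm z)\Delta(\bm w)\sum_{\kappa_1\le N,\ \ell(\kappa)\le K}s_\kappa(\bm z)s_\kappa(\bm w)$. Because you derive it from the dual Cauchy identity and Schur orthogonality, you get the error directly as an explicit tail of the Cauchy sum. That tail has no Vandermonde denominators and admits a majorant with nonnegative coefficients.

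\textbf{What each approach buys.} Your version makes the uniform bound, and the passage to derivatives, elementary and self-contained. It handles all $K,L$ at once and gives an explicit exponent: any $\delta<1-\alpha$. The paper's determinantal route has the advantage that the same formula \eqref{AV} also drives the microscopic analysis in Theorems \ref{micro} and \ref{thm:finiteNmicro}.

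\textbf{Two cosmetic points.}
\begin{itemize}
\item In standard notation, the partitions in your first expansion satisfy $\lambda\subseteq(K^N)$, i.e.\ at most $N$ rows and at most $K$ columns. That is what your transposition to $\kappa_1\le N$ actually uses; $(N^K)$ is the transposed rectangle.
\item In the full Cauchy sum, the terms with $\ell(\kappa)>\min(K,L)$ vanish because $s_\kappa(\bm z)$ or $s_\kappa(\bm w)$ is zero. The condition is $\ell(\kappa)>\min(K,L)$, not just $\ell(\kappa)>K$.
\end{itemize}
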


\begin{proof}
We start with \cite[Theorem 1]{akemann2003characteristic} applied to the CUE, which gives,
\begin{equation}\label{eqn:NF41}
\mathbb{E}\left[ \prod_{i=1}^{K}\Lambda_{N}(z_i)\prod_{j=1}^{K}\overline{\Lambda_{N}(\overline{w_j})} \right]=\frac{\det\left\{\mathcal{K}_N(z_iw_j)\right\}_{i,j=1}^K}{\Delta(\bm{z})\Delta(\bm{w})},
\end{equation}
where $\mathcal{K}_N(r)=\sum_{l=0}^{N+K-1}r^l=\frac{1-r^{N+K}}{1-r}$ and $\Delta(\bm z)$ is the Vandermonde determinant,
\begin{equation}
\Delta(\bm{z}) = \prod_{1 \leq i < j \leq K}(z_j-z_i) = \det\bigg\{z_{i}^{j-1}\bigg\}_{i,j=1}^{K}. \label{van}
\end{equation}
Consider the numerator on the right-hand side of \eqref{eqn:NF41} and first consider $z$ fixed. Since $\mathcal{K}_{N}(r) \to \frac{1}{1-r}$ uniformly on compact subsets of $|r|<1$ as $N \to \infty$, we have
\begin{equation}
\lim_{N\rightarrow \infty}\det\left\{\mathcal{K}_N(z_iw_j)\right\}_{i,j=1}^K= \det\left\{\frac{1}{1-z_iw_{j}}\right\}_{i,j=1}^K = \frac{\Delta(\bm{z})\Delta(\bm{w})}{\prod_{i,j=1}^K(1-z_iw_j)},
\end{equation}
where the second equality above is a consequence of the Cauchy determinant identity, see \cite[Chapter I, Section 4]{Macdonald1995}. Since the convergence is uniform and both sides of \eqref{eqn:NF41} are polynomials, combined with the above explicit evaluation, we obtain for any positive integers $K,L$,
\begin{equation}
\lim_{N\rightarrow\infty}\mathbb{E}\left[\prod_{i=1}^{K}\Lambda_N(z_i)\prod_{j=1}^{L}\overline{\Lambda_{N}(\overline{w}_j)}\right] =  \prod_{i=1}^{K}\prod_{j=1}^{L}\frac{1}{1-z_{i}w_{j}},  \label{lim-char}
\end{equation}
where the convergence is uniform on compact subsets of the variables $\{z_{i}\}_{i=1}^{K}$, $\{w_{j}\}_{j=1}^{L}$ belonging to the open unit disc. The result above for general $K$ and $L$ is obtained from the one for $L=K$ by setting the appropriate variables to zero using $\Lambda_{N}(0)=1$. Combining \eqref{lim-char} with Cauchy's integral formula in several complex variables, we obtain \eqref{meso} for $|z|<1$ fixed. The proof for $|z|<1-N^{-\alpha}$ follows from identical considerations, \cite[Lemma 22]{det}, and the estimate $\mathcal{K}_{N}(z) = \frac{1}{1-z} + O(e^{-N^{1-\alpha}})$.
\end{proof}
\begin{lemma}\label{Cor1}
The following identity holds for any $|z|<1$,
\begin{equation}
\begin{aligned}
&\mathcal{D}^{\nu}_{\bm w}\mathcal{D}^{\mu}_{\bm z}\prod_{i=1}^{K}\prod_{j=1}^{L}\frac{1}{1-z_iw_j}\Bigg|_{\bm z =z, \bm w = \bar z}\\
&=\frac{\mu!\nu! }{(1-|z|^2)^{KL + |\mu| +|\nu|}} [\bm u^\nu][\bm t^\mu]\prod_{i=1}^{K}\prod_{j=1}^{L}\frac{1}{(1-\bar{z}t_i)(1-zu_j) - t_iu_j},\label{utvars}
\end{aligned}
\end{equation}
where the notation $[\bm{u}^\nu]$ denotes coefficient extraction of the monomial $\prod_{j=1}^{L}u_{j}^{\nu_{j}}$, similarly for $[\bm{t}^\mu]$.
\end{lemma}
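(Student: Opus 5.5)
The plan is to express the derivatives at the point $(z,\dots,z;\bar z,\dots,\bar z)$ through Taylor coefficients. By Taylor's theorem in several variables, for $|z|<1$ and $F(\bm z,\bm w)=\prod_{i,j}(1-z_iw_j)^{-1}$ analytic near the point,
\begin{equation*}
\mathcal{D}^{\nu}_{\bm w}\mathcal{D}^{\mu}_{\bm z}F\big|_{\bm z=z,\bm w=\bar z} = \mu!\,\nu!\,[\bm y^{\nu}][\bm x^{\mu}]\,F(z+x_1,\dots,z+x_K;\bar z+y_1,\dots,\bar z+y_L).
\end{equation*}
So everything reduces to the shifted product. This is an identity of formal power series, or equivalently of convergent series in a small polydisc. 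Each factor expands as
\begin{equation*}
1-(z+x_i)(\bar z+y_j) = (1-|z|^2) - \bar z x_i - z y_j - x_iy_j .
\end{equation*}

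Next, rescale the variables to make the homogeneity visible. Set $x_i=(1-|z|^2)t_i$ and $y_j=(1-|z|^2)u_j$. Then each factor becomes
\begin{equation*}
(1-|z|^2)\big[\,1-\bar z t_i - z u_j - (1-|z|^2)t_iu_j\,\big].
\end{equation*}
The bracket equals $(1-\bar z t_i)(1-zu_j)-t_iu_j$, since $(1-\bar zt_i)(1-zu_j)=1-\bar zt_i-zu_j+|z|^2t_iu_j$. So the product over all $KL$ pairs contributes $(1-|z|^2)^{-KL}$ times $\prod_{i,j}\big[(1-\bar z t_i)(1-zu_j)-t_iu_j\big]^{-1}$. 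Coefficient extraction then rescales: $[\bm x^{\mu}]$ of a function of $\bm t=\bm x/(1-|z|^2)$ equals $(1-|z|^2)^{-|\mu|}[\bm t^{\mu}]$, and similarly $[\bm y^{\nu}]$ gives $(1-|z|^2)^{-|\nu|}[\bm u^{\nu}]$. Collecting the three powers yields the exponent $KL+|\mu|+|\nu|$ together with the prefactor $\mu!\nu!$, which is exactly \eqref{utvars}.

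There is no serious obstacle. The one point needing care is justifying the coefficient extraction. The function $\prod_{i,j}\big[(1-\bar z t_i)(1-zu_j)-t_iu_j\big]^{-1}$ is analytic in a neighbourhood of the origin, because each factor equals $1$ at $t=u=0$. Taylor coefficients are unchanged under the linear rescaling, so the manipulations are legitimate as identities of convergent power series. The remaining work is keeping track of the conjugates, so that $\bar z$ pairs with $t_i$ and $z$ with $u_j$.
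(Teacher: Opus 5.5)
Your proposal is correct and follows essentially the same route as the paper. The paper makes the single substitution $t_i=(z_i-z)/(1-|z|^2)$, $u_j=(w_j-\bar z)/(1-|z|^2)$ (your translation followed by rescaling), uses the same factorisation $1-z_iw_j=(1-|z|^2)\big((1-\bar z t_i)(1-zu_j)-t_iu_j\big)$, and reads off the derivatives at the origin as coefficient extraction; you simply make the bookkeeping of the powers of $(1-|z|^2)$ and the factor $\mu!\nu!$ more explicit.
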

\begin{proof}
In the left-hand side of \eqref{utvars} we make the change of variables
\begin{equation}
t_i=\frac{z_i-z}{1-|z|^2}, \quad u_j=\frac{w_j-\bar{z}}{1-|z|^2}, \qquad i,j=1,2,\dots
\end{equation}
valid for $|z|<1$. The lemma then follows from
\begin{equation}
\begin{aligned}
1-z_iw_j=(1-|z|^{2})((1-\bar{z}t_i)(1-zu_j) - t_iu_j),
\end{aligned}
\end{equation}
and by expressing the derivatives in terms of the $\bm{u}, \bm{t}$ variables evaluated at $0$, which can be recognised as coefficient extraction. 
\end{proof}
\begin{proof}[Proof of Theorem \ref{RMTTheorem}]
Starting with the right-hand side of \eqref{utvars}, we expand each factor in the product as a geometric series,
\begin{equation}
\begin{aligned}
\frac{1}{(1-\bar{z}t_i)(1-zu_j) - t_iu_j} &= \frac{1}{(1-\bar{z}t_i)(1-zu_j)} \frac{1}{1 - \frac{t_iu_j}{(1-\bar{z}t_i)(1-zu_j)}} \\
&= \sum_{k=0}^\infty \frac{t_i^k u_j^k}{(1-\bar{z}t_i)^{k+1}(1-zu_j)^{k+1}}.
\end{aligned}
\end{equation}
To evaluate the overall coefficient in \eqref{utvars}, we extract the coefficient of $t_i^{Q_{ij}}u_j^{R_{ij}}$ from each factor. Using the negative binomial expansion $[x^n](1-cx)^{-(k+1)} = \binom{n+k}{k}c^n$, we obtain
\begin{equation}
\begin{aligned}
&[t_i^{Q_{ij}}][u_j^{R_{ij}}] \sum_{k=0}^\infty \frac{t_i^k u_j^k}{(1-\bar{z}t_i)^{k+1}(1-zu_j)^{k+1}}\\ 
&= \sum_{k=0}^{\min(Q_{ij}, R_{ij})} \binom{Q_{ij}}{k} \bar{z}^{Q_{ij}-k} \binom{R_{ij}}{k} z^{R_{ij}-k} = p_{Q_{ij},R_{ij}}(z,\overline{z}),
\end{aligned}
\end{equation}
where we recall definition \eqref{poly}. Applying the full monomial coefficient $[\bm{u}^\nu][\bm{t}^\mu]$ fixes the corresponding row and column sums of $Q$ and $R$, leading to the sum over contingency tables
\begin{equation}
\begin{aligned}
[\bm{u}^\nu][ \bm{t}^\mu]\prod_{i=1}^{K}\prod_{j=1}^{L}\frac{1}{1-\bar{z}t_i-zu_j-(1-|z|^2)t_iu_j}&= \sum_{\substack{ Q\in\mathcal{M}_{\mu,\cdot}\\R\in\mathcal{M}_{\cdot,\nu}}}\prod_{i=1}^{K}\prod_{j=1}^{L}p_{Q_{ij},R_{ij}}(z,\bar{z}).
\end{aligned}
\end{equation}
The result then follows upon substitution into Lemmas \ref{Cor1} and \ref{lem1}.
\end{proof}

\begin{lemma}
	\label{lem:len2}
Consider the length $2$ lists $\mu=\nu=(k,k)$. In the limit $|z| \to 1$ we have the following closed form expression,
\begin{equation}
\begin{aligned}
h_{k}&:=\lim_{|z| \to 1}(1 - |z|^2)^{4(k+1)}\lim_{N \rightarrow \infty}\mathbb{E}\left[\bigl|\Lambda_N^{(k)}(|z|)\bigr|^{4}\right]\\
&=
  (k!)^4(k+1)\left(\binom{4k+3}{2k+2}
  -2 \binom{2 k+1}{k}^2\right).
\end{aligned}
\end{equation} 
\end{lemma}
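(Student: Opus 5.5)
The plan is to reduce $h_k$ to an explicit finite binomial sum using Theorem \ref{RMTTheorem}, and then evaluate that sum by Chu--Vandermonde-type convolutions. By Theorem \ref{RMTTheorem} and \eqref{leading}, which uses $p_{n,m}(1,1)=\binom{n+m}{m}$ from \eqref{binsum}, we have $h_k=h_{\mu,\nu}$ with $\mu=\nu=(k,k)$. Parametrise the $2\times 2$ tables as
\[
Q=\begin{pmatrix} a & k-a\\ b & k-b\end{pmatrix},\qquad R=\begin{pmatrix} c & d\\ k-c & k-d\end{pmatrix},\qquad a,b,c,d\in\{0,\dots,k\}.
\]
This gives
\[
\frac{h_k}{(k!)^4}=\sum_{a,b,c,d=0}^{k}\binom{a+c}{a}\binom{k-a+d}{d}\binom{b+k-c}{b}\binom{2k-b-d}{k-b}.
\]
An equivalent form is $h_k=(k!)^4[t_1^kt_2^ku_1^ku_2^k]\prod_{i,j=1}^2(1-t_i-u_j)^{-1}$, which is useful as a cross-check.

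The first step is to sum out $a$ and $b$, which appear only in disjoint pairs of factors. The identity $\sum_{a=0}^{k}\binom{a+c}{c}\binom{k-a+d}{d}=\binom{k+c+d+1}{k}$ is the coefficient of $x^k$ in $(1-x)^{-c-1}(1-x)^{-d-1}$. Applying it to $a$, and the same identity to $b$ with $c\mapsto k-c$ and $d\mapsto k-d$, gives
\[
\frac{h_k}{(k!)^4}=\sum_{c,d=0}^{k}\binom{k+1+c+d}{k}\binom{3k+1-c-d}{k}
=\sum_{s=0}^{2k}\bigl(k+1-|s-k|\bigr)\binom{k+1+s}{k}\binom{3k+1-s}{k}.
\]
In the second form we group by $s=c+d$, and the weight $k+1-|s-k|$ counts the pairs $(c,d)\in\{0,\dots,k\}^2$ with $c+d=s$.

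The second step, which I expect to be the main obstacle, is evaluating this triangularly weighted one-dimensional sum in closed form. I would write the weight as a difference of two untruncated counts. Over the full square $c,d\ge0$ with no upper bounds the count is $s+1$; removing the pairs with $c>k$ or with $d>k$ turns the truncated double sum into a full double sum minus two boundary sums, the two being equal by the symmetry $s\mapsto 2k-s$.

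For the full sum I would use the identities $\sum_{j}\binom{j}{p}\binom{n-j}{q}=\binom{n+1}{p+q+1}$ and $\sum_j(j+1)\cdots$, both obtained by differentiating generating functions $(1-x)^{-p-1}$. This should produce $(k+1)\binom{4k+3}{2k+2}$, where the factor $k+1$ comes from the symmetric centring of the linear weight. For the boundary pieces the shifted variable $c'=c-k-1\ge0$ decouples the two binomials. I expect them to factor as a product of two Vandermonde sums, each equal to $\binom{2k+1}{k}$, producing the correction $-2(k+1)\binom{2k+1}{k}^2$.

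If this bookkeeping becomes messy, a fallback is to prove the identity by Zeilberger's algorithm or WZ certification for the single sum in $s$. The result can be sanity-checked at $k=1$: the four terms give $8+9+9+8=34$, and indeed $2\bigl(\binom{7}{4}-2\cdot 9\bigr)=34$.
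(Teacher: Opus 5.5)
Your proposal is correct and follows the paper's proof: the same parametrisation and Vandermonde collapse of one pair of indices give the same double sum $\sum_{c,d=0}^{k}\binom{k+1+c+d}{k}\binom{3k+1-c-d}{k}$, whose closed form the paper asserts without derivation. Your full-minus-boundary evaluation works if the full sum runs over $c,d\ge 0$ with $c+d\le 2k+1$, where symmetry gives exactly $(k+1)\binom{4k+3}{2k+2}$; however, each boundary piece $\sum_{m=1}^{k+1}m\,P(m)$, with $P(m)=\binom{2k+1+m}{k}\binom{2k+1-m}{k}$, does not decouple into two Vandermonde sums but telescopes via $2m\,P(m)=(k+1)\bigl(Z(m)-Z(m+1)\bigr)$, $Z(m)=\binom{2k+1+m}{k+1}\binom{2k+2-m}{k+1}$, giving $\tfrac{k+1}{2}Z(1)=(k+1)\binom{2k+1}{k}^{2}$ as you predicted.
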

\begin{proof}
With this choice of $\mu$ and $\nu$, Theorem \ref{RMTTheorem}, in particular \eqref{leading} gives
\begin{equation}
h_{k}=(k!)^{4} \sum_{\substack{ Q\in\mathcal{M}_{\mu,\cdot}\\R\in\mathcal{M}_{\cdot,\nu}}}\prod_{i=1}^{2}\prod_{j=1}^{2}\binom{R_{ij}+Q_{ij}}{Q_{ij}}.
\end{equation}
The proof then follows by evaluating the sum over $2 \times 2$ contingency tables. These are parameterised by
\begin{equation}
Q = \begin{pmatrix}
n&& k-n\\
m&& k-m
\end{pmatrix}, \qquad
R = \begin{pmatrix}
k-p&& k-q\\
p&& q
\end{pmatrix},
\end{equation}
for $n,m,p,q\in \{0,\dots,k \}$. We begin by summing over $p,q$,
\begin{equation}
\begin{aligned}
h_{k} &=(k!)^{4}\sum_{Q \in \mathcal{M}_{\mu,\cdot}} \sum_{p = 0}^{k}\sum_{q = 0}^{k}\binom{k-p+Q_{11}}{Q_{11}}\binom{k-q+Q_{12}}{Q_{12}}\binom{p+Q_{21}}{Q_{21}}\binom{q+Q_{22}}{Q_{22}}\\
&=(k!)^{4} \sum_{Q \in \mathcal{M}_{\mu,\cdot}}
\binom{Q_{11}+Q_{21}+k+1}{k}\binom{Q_{12}+Q_{22}+k+1}{k},
\end{aligned}
\end{equation}
where we used a binomial sum similar to \eqref{binsum}. Similarly, the sum over $m,n$ gives 
\begin{equation}
\begin{aligned}
h_{k}
 &=(k!)^4\sum_{n = 0}^{k}\sum_{m = 0}^{k}  \binom{n+m+k+1}{k}\binom{3k-n-m+1}{k}
  \\&=
  (k!)^4(k+1)\left(\binom{4k+3}{2k+2}
  -2 \binom{2 k+1}{k}^2\right).
\end{aligned}
\end{equation} 
\end{proof}
\begin{Theorem}
\label{micro}
	Let $z = 1-\frac{c}{N}$. Without loss of generality suppose that $K \geq L$. Then uniformly in compact subsets of $c \in \mathbb{C}$, we have
	\begin{equation}
		\label{microlimit}
		\lim_{N \to \infty}\frac{M_{\mu,\nu}(z,N)}{ N^{|\mu|+|\nu|+KL}} =  (-1)^{|\mu|+|\nu|}\mathcal{D}^{\mu}_{\bm{a}}\mathcal{D}^{\nu}_{\bm{b}}\frac{\det(\Omega)}{\Delta(\bm a)\Delta(\bm b)}\bigg{|}_{\substack{\bm a = c\\ \bm b = \overline{c}}},
	\end{equation}
	where $\Delta(\bm a)$ is given by \eqref{van},
	\begin{equation}
		\Omega =  
		\begin{pmatrix} 
			1 & a_1 & \dots & a_1^{K-L-1} & I(a_1 + b_1) & \dots & I(a_1 + b_L) \\
			1 & a_2 & \dots & a_2^{K-L-1} & I(a_2 + b_1) & \dots & I(a_2 + b_L) \\
			\vdots & \vdots & \ddots & \vdots & \vdots & \ddots & \vdots \\
			1 & a_K & \dots & a_K^{K-L-1} & I(a_K + b_1) & \dots & I(a_K + b_L)
		\end{pmatrix},
	\end{equation}
	and $I(x) = \frac{1-e^{-x}}{x}$.
\end{Theorem}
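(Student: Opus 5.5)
We start from an exact finite-$N$ formula for the mixed ratio-free average with $K\ge L$ and then take the microscopic limit. The natural starting point is the generalisation of \eqref{eqn:NF41} to unequal numbers of factors. By \cite[Theorem 1]{akemann2003characteristic}, or equivalently by completing the $L$ conjugate factors with $K-L$ auxiliary variables $w_{L+1},\dots,w_K$ and then letting these tend to $0$ (using $\Lambda_N(0)=1$), we obtain
\[
\mathbb{E}\Big[\prod_{i=1}^{K}\Lambda_N(z_i)\prod_{j=1}^{L}\overline{\Lambda_N(\overline{w_j})}\Big]=\frac{\det\big(\,z_i^{k-1}\ (k\le K-L)\ \big|\ \mathcal{K}_{N+K-L}(z_iw_j)\ (j\le L)\big)}{\Delta(\bm z)\Delta(\bm w)}.
\]
Here the first $K-L$ columns arise because, after setting $w_j\to0$, the columns $\mathcal{K}_N(z_iw_j)$ are expanded in $w_j$ and the lower-order terms are eliminated by column operations. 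We will settle the exact index shift in $\mathcal{K}$ and any sign at this stage; a shift of $O(1)$ is irrelevant in the limit. This identity holds for all complex $z_i,w_j$, since both sides are polynomials.

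Next we substitute $z_i=1-a_i/N$ and $w_j=1-b_j/N$. Since $\Lambda_N^{(m)}(z)$ picks up a factor $(-N)^{m}$ under $\partial_{a}$ in place of $\partial_{z}$, we get $M_{\mu,\nu}(z,N)=(-N)^{|\mu|+|\nu|}\mathcal{D}^\mu_{\bm a}\mathcal{D}^\nu_{\bm b}F_N(\bm a,\bm b)$ evaluated at $\bm a=c$, $\bm b=\bar c$, where $F_N$ denotes the above average in the new variables. This produces the sign $(-1)^{|\mu|+|\nu|}$. The analysis then rests on two scalings.

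The first is the Vandermonde factors: $\Delta(\bm z)=(-1/N)^{\binom K2}\Delta(\bm a)$ and $\Delta(\bm w)=(-1/N)^{\binom L2}\Delta(\bm b)$.

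The second is the numerator columns. We have $z_iw_j=1-(a_i+b_j)/N+O(N^{-2})$, so $\mathcal{K}_{N'}(z_iw_j)=\sum_{l<N'}(z_iw_j)^l= N\,I(a_i+b_j)+O(1)$ by a Riemann sum, uniformly on compacts. This gives $L$ columns of size $N$. For the polynomial columns $z_i^{k-1}=(1-a_i/N)^{k-1}$, column operations turn them into $(-a_i/N)^{k-1}$ plus lower columns, contributing $N^{-\binom{K-L}{2}}$ up to sign.

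Collecting powers gives $N^{L-\binom{K-L}2+\binom K2+\binom L2}$. Since $\binom K2-\binom{K-L}2=L(K-L)+\binom L2$, the exponent equals $L+L(K-L)+2\binom L2=KL$, as required. Signs should cancel similarly and will be checked.

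The main obstacle is justifying interchange of the limit $N\to\infty$ with the derivatives $\mathcal{D}^\mu_{\bm a}\mathcal{D}^\nu_{\bm b}$ at the coincident point $\bm a=c$, where $\det\Omega/(\Delta(\bm a)\Delta(\bm b))$ is a removable $0/0$. To handle this we note that the rescaled $F_N$ and its limit are entire in $(\bm a,\bm b)$: the ratio of an antisymmetric analytic determinant by the Vandermondes is analytic. Moreover, the convergence of the rescaled numerator is uniform on compact polydiscs. Dividing by the Vandermondes preserves uniform convergence on compacts by the maximum principle (as in \cite[Lemma 22]{det}), applied on polydiscs around the diagonal. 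Cauchy's integral formula in several variables then transfers uniform convergence to all derivatives, uniformly in $c$ in compacts. This yields \eqref{microlimit}.
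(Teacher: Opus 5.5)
Your route is the paper's: the Akemann--Vernizzi identity for $K\ge L$, the substitution $z_i=1-a_i/N$, $w_j=1-b_j/N$, the Vandermonde scalings, $\mathcal{K}\approx N\,I(a_i+b_j)$, unitriangular reduction of the $K-L$ polynomial columns, the count $N^{KL}$, and an exchange of limits. Your justification of the exchange is more explicit than the paper's appeal to Weierstrass: you use uniform convergence on the distinguished boundary of a polydisc with distinct radii, where the Vandermondes are bounded below, followed by the maximum principle and Cauchy estimates.

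Your displayed finite-$N$ identity is wrong as written, and the error is not just an index shift. Letting $w_{L+1},\dots,w_K\to0$ also turns the cross factors $\prod_{i\le L<j}(w_j-w_i)$ of $\Delta(\bm w)$ into $(-1)^{L(K-L)}\prod_{j\le L}w_j^{K-L}$. In your convention $\mathcal{K}_{N'}(r)=\sum_{l<N'}r^l$, the correct identity is
\[
\mathbb{E}\Big[\prod_{i=1}^{K}\Lambda_N(z_i)\prod_{j=1}^{L}\overline{\Lambda_N(\overline{w_j})}\Big]=\frac{\det\big(z_i^{k-1}\ (k\le K-L)\ \big|\ \mathcal{K}_{N+K}(z_iw_j)\ (j\le L)\big)}{\Delta(\bm z)\,\Delta(\bm w)\prod_{j=1}^{L}w_j^{K-L}}.
\]
Equivalently, the numerator is $\det\big(z_i^{k-1}\,\big|\,z_i^{K-L}\mathcal{K}_{N+L}(z_iw_j)\big)$ with no $w$-prefactor. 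For example, when $N=1$, $K=2$, $L=1$ the average is $1+w(z_1+z_2)$, while your formula gives $w$. The extra factor tends to $1$ uniformly on compacts, so this does not affect the limit.

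The step that actually fails is the one you defer: the signs do not cancel. With the polynomial columns first, as in $\Omega$, the Vandermondes contribute $(-1)^{\binom K2+\binom L2}$ and the column reduction contributes $(-1)^{\binom{K-L}2}$. The net sign is therefore $(-1)^{L(K-L)}$, and your argument proves
\[
\lim_{N\to\infty}\frac{M_{\mu,\nu}(z,N)}{N^{|\mu|+|\nu|+KL}}=(-1)^{|\mu|+|\nu|+L(K-L)}\mathcal{D}^{\mu}_{\bm a}\mathcal{D}^{\nu}_{\bm b}\frac{\det(\Omega)}{\Delta(\bm a)\Delta(\bm b)}\bigg|_{\bm a=c,\ \bm b=\overline{c}}.
\]
This is not a defect of your method: for $L(K-L)$ odd, the statement as written has the wrong sign. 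Take $K=2$, $L=1$, $\mu=(0,0)$, $\nu=(0)$, and $\tau=c+\overline{c}$. Then $M_{\mu,\nu}(z,N)=\sum_{n=0}^{N}(n+1)|z|^{2n}$, so the left-hand side of \eqref{microlimit} equals $I_1(\tau)>0$. The right-hand side equals $I'(\tau)=-I_1(\tau)$.

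The paper's proof reaches the stated sign only because its displayed version of the Akemann--Vernizzi formula, with the kernel block first, is itself off by $(-1)^{L(K-L)}$; for $N=1$, $K=2$, $L=1$ it yields $-(1+w(z_1+z_2))$. The fix is either to place the $I$-columns before the polynomial columns in $\Omega$, or to insert the factor $(-1)^{L(K-L)}$. For $K=L$, the only case used afterwards (Corollary~\ref{cor:4th}), nothing changes.
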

\begin{proof}	
The general $K \geq L$ version of \eqref{eqn:NF41} from \cite[Theorem 1]{akemann2003characteristic} reads
\begin{equation}
\mathbb{E}\left[ \prod_{i=1}^K \Lambda_{N}(z_i)\prod_{j=1}^L \overline{\Lambda_{N}(\overline{w_j})} \right]
=
\frac{\det(\Omega_{N})}{\Delta(\bm{z})\Delta(\bm{w})},\label{AV}
\end{equation}
where
\[
\Omega_{N} = 
\begin{pmatrix} 
	z_1^{K-L} \mathcal{K}_{N+L}(z_1, w_1) & \dots & z_1^{K-L} \mathcal{K}_{N+L}(z_1, w_L) & 1 & z_1 & \dots & z_1^{K-L-1} \\
	z_2^{K-L} \mathcal{K}_{N+L}(z_2, w_1) & \dots & z_2^{K-L} \mathcal{K}_{N+L}(z_2, w_L) & 1 & z_2 & \dots & z_2^{K-L-1} \\
	\vdots & \ddots & \vdots & \vdots & \vdots & \ddots & \vdots \\
	z_K^{K-L} \mathcal{K}_{N+L}(z_K, w_1) & \dots & z_K^{K-L} \mathcal{K}_{N+L}(z_K, w_L) & 1 & z_K & \dots & z_K^{K-L-1}
\end{pmatrix},
\]
and
\begin{equation}
\mathcal{K}_{N+L}(z,w) = \sum_{p=0}^{N+L-1} (zw)^p = \frac{(zw)^{N+L} - 1}{zw - 1}. \label{kern}
\end{equation}
Note that $\Omega_{N}$ is size $K \times K$, consisting of a kernel block of size $K \times L$ followed by $K-L$ excess columns. Let
\begin{equation}
	z_{i} = 1-\frac{a_{i}}{N}, \qquad w_{j} = 1-\frac{b_{j}}{N},
\end{equation}
for all $i=1,\ldots,K$ and $j=1,\ldots,L$, where $a_{i}$ and $b_{j}$ are complex numbers independent of $N$. The derivatives transform as
\begin{equation}
\mathcal D_{\bm z}^{\mu}\mathcal D_{\bm w}^{\nu}
=
(-N)^{|\mu|+|\nu|}
\mathcal D_{\bm a}^{\mu}\mathcal D_{\bm b}^{\nu}.
\end{equation}
Then we express the moments as
\begin{equation}
	\begin{split}
M_{\mu,\nu}(z,N) = (-N)^{|\mu|+|\nu|}\mathcal{D}^{\mu}_{\bm{a}}\mathcal{D}^{\nu}_{\bm{b}}\mathbb{E}\left(\prod_{i=1}^K \Lambda_{N}(z_{i})\prod_{j=1}^{L}\overline{\Lambda_{N}(\overline{w}_j)}\right)\bigg{|}_{\substack{\bm a = c\\ \bm b = \overline{c}}}, \label{merging}
\end{split}
\end{equation}
and apply \eqref{AV}. We use
\begin{equation}
	\Delta(\bm z) = (-N)^{-K(K-1)/2}\Delta(\bm a), \qquad \Delta(\bm w) = (-N)^{-L(L-1)/2}\Delta(\bm b),
\end{equation}
and
\begin{equation}
	\mathcal{K}_{N+L}(z_i, w_j) \sim N\frac{1-e^{-a_{i}-b_{j}}}{a_{i}+b_{j}}, \qquad N \to \infty. \label{kernlimit}
\end{equation}
This follows immediately from \eqref{kern}. Hence the entries of $\Omega_{N}$ involving $\mathcal{K}_{N+L}(z_i,w_j)$ will be of size $N$ and as this occupies $L$ columns, it will give a factor $N^{L}$. Consider the matrix composed of the remaining $K-L$ excess columns. If we subtract column 1 from column 2, column 2 will now have size $1/N$. Performing the same operation on all columns gives consecutively higher powers of $1/N$ in each column. Doing this for all $K-L$ columns and factoring the powers out of the determinant gives a total power $N^{-(K-L)(K-L-1)/2}$. Adding all the exponents gives
\begin{equation}
	N^{K(K-1)/2+L(L-1)/2-(K-L)(K-L-1)/2+L} = N^{KL}.
\end{equation}
This gives a total order of $N^{|\mu|+|\nu|+\ell(\mu)\ell(\nu)}$. Taking care of the resulting powers of $-1$ and interchanging the kernel block with the excess columns, we arrive at the formula on the right-hand side of \eqref{microlimit}. To justify exchanging the limit $N \to \infty$ with the merging in \eqref{merging} note that 1) $\frac{\det(\Omega_{N})}{\Delta(\bm z) \Delta(\bm w)}$ is a polynomial, 2) the limit $\frac{\det(\Omega)}{\Delta(\bm a) \Delta(\bm b)}$ in \eqref{microlimit} is an entire function and 3) the convergence in \eqref{kernlimit} is uniform on compact subsets of $a_{i},b_{j} \in \mathbb{C}$. Then the exchange of limits follows from the Weierstrass convergence theorem.
 \end{proof}

 \begin{remark}
  The function $I(x)$ in \eqref{microlimit} can be recognised in terms of the \textit{sine kernel}, $I(2\pi ix) = e^{-\pi ix}\frac{\sin(\pi x)}{\pi x}$, which is expected to arise when one studies limiting correlations of the CUE on the microscopic scale. It will be useful in the following to note that $I^{(r)}(x) = (-1)^{r}I_{r}(x)$ where $I_{r}(x)$ is given by \eqref{ir}.
 \end{remark}
  For $K=L=2$, the determinants in \eqref{microlimit} are $2 \times 2$ and the required merging can be done explicitly. 
 \begin{Corollary}
 \label{cor:4th}
Let $z = 1-\frac{c}{N}$. By rotational invariance it suffices to consider $c\in \mathbb{R}$. Then we obtain the following formula for the $4^{\mathrm{th}}$ moment of the $k^{\mathrm{th}}$ derivative, for any $k \in \mathbb{N}$,
 \begin{equation}
	\begin{split}
		&\lim_{N \to \infty}\frac{\mathbb{E}(|\Lambda^{(k)}_{N}(z)|^{4})}{N^{4+4k}} =\sum_{i,j=0}^{k}\left(C^{(1)}_{ij}(k)I_{4k-i-j+2}(2c)I_{i+j}(2c)\right.\\
        &\left.-C^{(2)}_{ij}(k)I_{3k-i-j+1}(2c)I_{k+i+j+1}(2c)\right),
	\end{split}
\end{equation}
where
\begin{equation}
\begin{split}
C^{(1)}_{ij}(k) &= \frac{(k!)^{2}\binom{k}{i}\binom{k}{j}}{(2k-i+1)!}\frac{(k-i)!(k-j)!}{(2k+1-j)!},\\
C^{(2)}_{ij}(k) &= \frac{(k!)^{2}\binom{k}{i}\binom{k}{j}}{(2k-i+1)!}\frac{j!(k-i)!}{(k+j+1)!}.
\end{split}
\end{equation}
 \end{Corollary}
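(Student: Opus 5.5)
Specialise Theorem \ref{micro} to $K=L=2$ and $\mu=\nu=(k,k)$. For $K=L$ there are no excess columns, so $\Omega=\big(I(a_i+b_j)\big)_{i,j=1}^{2}$. Since $(-1)^{|\mu|+|\nu|}=(-1)^{4k}=1$, the limit equals
\begin{equation*}
\partial_{a_1}^k\partial_{a_2}^k\partial_{b_1}^k\partial_{b_2}^k\,\frac{I(a_1+b_1)I(a_2+b_2)-I(a_1+b_2)I(a_2+b_1)}{(a_2-a_1)(b_2-b_1)}\bigg|_{\bm a=c,\ \bm b=c},
\end{equation*}
with $c$ real by rotational invariance.

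The Vandermonde denominators are removed using divided differences. The key identity is
\begin{equation*}
\frac{f(a_2)-f(a_1)}{a_2-a_1}=\int_0^1 f'(a_1+s(a_2-a_1))\,ds,
\end{equation*}
or equivalently, in the merged limit,
\begin{equation*}
\partial_{a_1}^k\partial_{a_2}^k\frac{F(a_1,a_2)}{a_2-a_1}\bigg|_{a_1=a_2=c}
\end{equation*}
for $F$ antisymmetric. A cleaner route is to write the $2\times 2$ determinant divided by $\Delta(\bm a)\Delta(\bm b)$ through the Cauchy--Binet/Andreief structure behind $I$. Since $I(x)=\int_0^1 e^{-xt}\,dt$, we have
\begin{equation*}
\det\big(I(a_i+b_j)\big)=\frac12\int_{[0,1]^2}\det\big(e^{-a_it_l}\big)\det\big(e^{-b_jt_l}\big)\,dt_1\,dt_2 .
\end{equation*}
Each of $\det(e^{-a_it_l})/\Delta(\bm a)$ and $\det(e^{-b_jt_l})/\Delta(\bm b)$ is an entire symmetric function. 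After applying $\partial_{a_1}^k\partial_{a_2}^k$ and merging $a_1=a_2=c$, the first is a Wronskian-type expression: a linear combination of $e^{-c(t_1+t_2)}$ times polynomials in $t_1,t_2$.

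To compute it, expand
\begin{equation*}
\frac{e^{-a_1t_1-a_2t_2}-e^{-a_2t_1-a_1t_2}}{a_2-a_1}
\end{equation*}
by writing $a_j=c+x_j$ and using the Taylor coefficients. The coefficient of $x_1^kx_2^k$ in $\det\big(e^{-x_it_l}\big)/(x_2-x_1)$ is a Schur-function coefficient: by the bialternant formula it is $s_{(k,k)}$-type, expressed as $\det\big(t_l^{\,k+2-i}/(k+2-i)!\big)/(t_2-t_1)$ up to signs. Carrying this out and differentiating gives, for the $\bm a$ side, a sum over $i=0,\dots,k$ of binomial weights $\binom{k}{i}$ multiplying monomials $t_1^{2k-i+1}t_2^{i}$ (antisymmetrised), with factorial denominators $(2k-i+1)!$. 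Doing the same for $\bm b$ with index $j$ and integrating over $[0,1]^2$ against $e^{-2c(t_1+t_2)}$ then produces products $I_{p}(2c)I_{q}(2c)$.

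\textbf{Obtaining $C^{(1)}_{ij}(k)$ and $C^{(2)}_{ij}(k)$.} The diagonal pairing of the two antisymmetrisations gives exponents $(4k-i-j+2,\ i+j)$. The off-diagonal pairing gives $(3k-i-j+1,\ k+i+j+1)$ with a minus sign. The antisymmetrisation factor $2$ cancels the $\frac12$ from Andreief. Matching the factorial prefactors $(k!)^2\binom{k}{i}\binom{k}{j}(k-i)!(k-j)!/\big((2k-i+1)!(2k+1-j)!\big)$, and the analogous $j!$ version in the cross term, is bookkeeping.

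The main obstacle is getting the exact coefficients. The asymmetric form of $C^{(2)}$ suggests the authors applied Leibniz in a specific non-symmetric way. I would therefore follow the most direct route instead: apply Leibniz to the product $I(a_1+b_1)I(a_2+b_2)$ divided by the Vandermondes. Concretely, expand $1/(a_2-a_1)$ via $\partial_{a_2}^k$ of a divided difference, which turns into $(k-i)!/(2k-i+1)!$-type weights from integrals $\int_0^1 s^{k-i}(1-s)^{k}\,ds$ (Beta integrals). Then use $I^{(r)}=(-1)^rI_r$ from the Remark, and check the final formula numerically for $k=0$, where it must reduce to the known fourth-moment density $I_2I_0-I_1^2$.
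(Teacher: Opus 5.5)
Your final plan (an integral divided-difference formula to remove the Vandermonde denominators, then Leibniz, Beta integrals, and $I^{(r)}=(-1)^rI_r$) is the paper's proof. The paper does both divided differences at once through $G(u,v)$ with $a(u)=a_1+u(a_2-a_1)$ and $b(v)=b_1+v(b_2-b_1)$. Since $G(1,1)=G(1,0)=G(0,0)=0$ and $G(0,1)=f$, this gives $f=-\int_0^1\!\int_0^1\partial_u\partial_v G\,du\,dv$. The asymmetry of $C^{(2)}_{ij}(k)$ comes from the cross-term Beta integral $\int_0^1 v^k(1-v)^j\,dv=k!\,j!/(k+j+1)!$, versus $k!\,(k-j)!/(2k-j+1)!$ in the other term.

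The Andr\'eief route you set aside is a complete alternative, and your reason for dropping it is mistaken. One step needs correcting. The coefficient of $x_1^kx_2^k$ in $\det(e^{-x_it_l})/(x_2-x_1)$ is not a single $s_{(k,k)}$-type bialternant. It is a sum over $\lambda=(2k-i,i)$, $0\le i\le k$, each with Kostka number $K_{\lambda,(k,k)}=1$, which gives
\[
\partial_{a_1}^k\partial_{a_2}^k\frac{\det(e^{-a_it_l})}{\Delta(\bm a)}\bigg|_{\bm a=c}=e^{-c(t_1+t_2)}\sum_{i=0}^{k}w_i\big(t_1^{2k+1-i}t_2^{i}-t_1^{i}t_2^{2k+1-i}\big),\qquad w_i=\frac{(k!)^2}{i!\,(2k+1-i)!}.
\]
The diagonal pairing then gives $C^{(1)}_{ij}(k)=w_iw_j$ exactly. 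The cross pairing gives $I_{2k-i+j+1}(2c)I_{2k+i-j+1}(2c)$, not yet the stated indices. The relabelling $j\mapsto k-j$ turns this into $I_{3k-i-j+1}I_{k+i+j+1}$ with coefficient $w_iw_{k-j}=C^{(2)}_{ij}(k)$, so the asymmetry is only a reindexing.

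Comparing the two routes:
\begin{itemize}
\item Your route is exactly the case $s=2$, $\mu=\nu=(k,k)$, $\tau=2c$ of Theorem~\ref{thm:micro2} (Lemma~\ref{lem:confluentkostka} plus Andr\'eief), and it extends to any $s$.
\item The paper's $G(u,v)$ trick needs only the fundamental theorem of calculus and Beta integrals, but it is tied to the $2\times 2$ case.
\end{itemize}
In either route, the coefficient matching you call bookkeeping is the whole content of the corollary and must be written out. With the $w_i$ above it takes two lines.
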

 \begin{proof}
The limit in \eqref{microlimit} prior to taking the derivatives and merging is
 \begin{equation}
 f(a_1,a_2,b_1,b_2) = \frac{I(a_1+b_1)I(a_2+b_2)-I(a_1+b_2)I(a_2+b_1)}{(a_2-a_1)(b_2-b_1)}. \label{fdef}
 \end{equation}
 Introduce the function
 \begin{equation}
 	G(u,v) = \frac{I(a(u)+b_{1})I(a_{2}+b(v))-I(a(u)+b(v))I(a_{2}+b_{1})}{(a_2-a_1)(b_2-b_1)},
 \end{equation}
  where $a(u) = a_{1}+u(a_{2}-a_{1})$ and $b(v) = b_{1}+v(b_{2}-b_{1})$. Then $G(1,1) = G(1,0)=G(0,0) = 0$ and $G(0,1) = f(a_1,a_2,b_1,b_2)$. By the fundamental theorem of calculus, we have
  \begin{equation}
 \begin{split}
  &f(a_1,a_2,b_1,b_2) = -\int_{0}^{1}du\int_{0}^{1}dv\frac{\partial}{\partial u}\frac{\partial}{\partial v}G(u,v)\\
  &=-\int_{0}^{1}du\int_{0}^{1}dv\,\Big(I'(a(u)+b_{1})I'(a_{2}+b(v))-I''(a(u)+b(v))I(a_{2}+b_{1})\Big). \label{fintrep}
  \end{split}
  \end{equation}
Thus we eliminated the denominator in \eqref{fdef}, facilitating taking derivatives and merging points. Now we differentiate \eqref{fintrep} $k$ times with respect to each variable using the Leibniz rule for differentiating a product $k$ times, and finally merge all the points $a_{1}=a_{2}=b_{1}=b_{2}=c$. This gives the following binomial sum as the integrand
 \begin{equation}
 	\begin{split}
 	&-(1-u)^{k}v^{k}\sum_{i,j=0}^{k}\binom{k}{i}\binom{k}{j}\left(u^{k-i}(1-v)^{j}I_{3k-i-j+1}(2c)I_{k+i+j+1}(2c)\right.\\
    &\left.-u^{k-i}(1-v)^{k-j}I_{4k-i-j+2}(2c)I_{i+j}(2c)\right).
 	\end{split}
 \end{equation}
 The $u$ and $v$ integration can be done explicitly, and we arrive at the stated result.
 \end{proof}
 \begin{remark}
 Taking $c=0$ in Corollary \ref{cor:4th} using \eqref{ir}, we obtain the following formula for moments precisely on the unit circle,
\begin{equation}
\begin{split}
    &\lim_{N \to \infty}\frac{\mathbb{E}(|\Lambda^{(k)}_{N}(1)|^{4})}{N^{4+4k}} = \sum_{i,j=0}^{k}\left(\frac{C^{(1)}_{ij}(k)}{(4k-i-j+3)(i+j+1)}\right.\\
    &\left.-\frac{C^{(2)}_{ij}(k)}{(3k-i-j+2)(k+i+j+2)}\right). \label{z1-4thmom}
    \end{split}
\end{equation}
This agrees with the previously known results, \textit{e.g.}, for the first derivative $k=1$, it reduces to the known value $\frac{61}{10080}$, see e.g.\ \cite[Section 5]{CRS06}. For general $k$ we can also compare \eqref{z1-4thmom} with the corresponding particular case of \cite[Theorem 5]{jointmoments1} that to our understanding involves six iterated summations. We have checked for modest values of $k$ that their expression coincides with \eqref{z1-4thmom}.
\end{remark}
We now present a result that generalizes the exact formula \cite[Theorem 1.5]{SW26} to higher order derivatives. Our approach uses the machinery of symmetric functions.
\begin{lemma}\label{lem:confluentkostka}
Let $s\in \mathbb{N}$ and let $f_{1},\dots,f_{s}$ be analytic in a neighbourhood of $z\in \mathbb{C}$. Let $\mu=(\mu_1,\dots,\mu_s)$ be a list of non-negative integers of length $s$. Then
\begin{equation}
\left.\mathcal{D}^{\mu}_{\mathbf{x}}
\frac{\det\!\big(f_{j}(x_{i})\big)_{i,j=1}^{s}}{\Delta(\mathbf{x})}
\right|_{x_{1}=\cdots=x_{s}=z}
=
(-1)^{\frac{s(s-1)}{2}}\mu!\sum_{\substack{\lambda\vdash |\mu|\\ \ell(\lambda)\le s}}
\frac{K_{\lambda\mu}}{\lambda!_{(s)}}
\det\!\big(f_{j}^{(\alpha_{i}(\lambda))}(z)\big)_{i,j=1}^{s},
\label{k-expan}
\end{equation}
where we recall the notation and conventions outlined above Theorem \ref{thm:micro2}. 
\end{lemma}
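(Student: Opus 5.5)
The plan is to expand each $f_j$ in a Taylor series about $z$, apply a Cauchy--Binet expansion to the numerator, and recognise Schur polynomials in the shifted variables $y_i := x_i - z$. Since $\Delta(\mathbf{x})=\Delta(\mathbf{y})$ is translation invariant, the problem reduces to expanding $\det\big(f_j(z+y_i)\big)/\Delta(\mathbf{y})$ as a power series in $\mathbf{y}$ and extracting the coefficient of $\mathbf{y}^{\mu}$. Differentiating and evaluating at $\mathbf{y}=0$ gives exactly $\mu!\,[\mathbf{y}^{\mu}]$ of that series.

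\textbf{Expansion of the numerator.} Write $f_j(z+y)=\sum_{m\ge 0} c_{j,m}\,y^m$ with $c_{j,m}=f_j^{(m)}(z)/m!$. By multilinearity of the determinant in its rows (equivalently, a Cauchy--Binet expansion), and using antisymmetry to collect terms with distinct exponents,
\begin{equation*}
\det\big(f_j(z+y_i)\big)_{i,j=1}^{s}=\sum_{m_1>m_2>\cdots>m_s\ge 0}\det\big(y_i^{m_k}\big)_{i,k=1}^{s}\,\det\big(c_{j,m_k}\big)_{k,j=1}^{s}.
\end{equation*}
Parametrise the strictly decreasing exponents as $m_k=\lambda_k+s-k=\alpha_k(\lambda)$ with $\lambda$ a partition of length at most $s$. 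The bialternant formula gives
\begin{equation*}
\det\big(y_i^{\lambda_k+s-k}\big)=s_\lambda(\mathbf{y})\det\big(y_i^{s-k}\big).
\end{equation*}
Reversing the columns, $\det(y_i^{s-k})=(-1)^{s(s-1)/2}\det(y_i^{k-1})=(-1)^{s(s-1)/2}\Delta(\mathbf{y})$ with the paper's convention \eqref{van}. Hence
\begin{equation*}
\frac{\det\big(f_j(x_i)\big)}{\Delta(\mathbf{x})}=(-1)^{\frac{s(s-1)}{2}}\sum_{\lambda}s_\lambda(\mathbf{y})\,\frac{\det\big(f_j^{(\alpha_k(\lambda))}(z)\big)_{k,j=1}^{s}}{\lambda!_{(s)}},
\end{equation*}
where the factorials $\alpha_k(\lambda)!$ have been pulled out of the rows of the coefficient determinant.

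\textbf{Coefficient extraction.} The Schur polynomial $s_\lambda$ is homogeneous of degree $|\lambda|$, so only $\lambda\vdash|\mu|$ contributes to the monomial $\mathbf{y}^{\mu}$. By the standard expansion $s_\lambda=\sum_\rho K_{\lambda\rho}m_\rho$ together with the symmetry of $s_\lambda$, we have $[\mathbf{y}^{\mu}]s_\lambda=K_{\lambda\mu}$ for the list $\mu$, where $\mu$ is read as a partition after sorting and discarding zeros, matching the paper's convention. Multiplying by $\mu!$ yields \eqref{k-expan}.

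\textbf{Main obstacle: justifying the infinite expansion.} This is the only analytic subtlety. I would avoid convergence questions for the infinite Cauchy--Binet sum by truncating instead. Replace each $f_j$ by its Taylor polynomial of degree $M$, where $M$ is large compared with $|\mu|+s^2$. The remainder $\det(f_j(x_i))-\det(P_{j,M}(x_i))$ is antisymmetric and analytic, and it vanishes to order at least $M+1$ in $\mathbf{y}$. Its quotient by $\Delta$ is therefore analytic and vanishes to order at least $M+1-s(s-1)/2>|\mu|$. Consequently it is killed by $\mathcal{D}^{\mu}$ at $\mathbf{y}=0$. For the polynomial truncation the expansion above is a finite identity, and the terms with $\alpha_1(\lambda)>M$ are simply absent; these do not matter, since all $\lambda\vdash|\mu|$ satisfy $\alpha_1(\lambda)\le M$.
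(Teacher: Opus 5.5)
Your proposal is correct and follows essentially the same route as the paper. Both proofs expand $\det\big(f_j(z+y_i)\big)$ by Taylor series and Cauchy--Binet (the paper quotes this expansion from the literature), use the bialternant formula to turn $\det\big(y_i^{\alpha_k(\lambda)}\big)/\Delta(\mathbf{y})$ into $(-1)^{s(s-1)/2}s_\lambda(\mathbf{y})$, and extract $\mu!\,K_{\lambda\mu}$ from the monomial expansion of $s_\lambda$; indexing the exponents in decreasing order from the start saves the paper's double sign bookkeeping, and your truncation argument rigorously justifies the infinite expansion that the paper takes for granted.
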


\begin{proof}
By \cite[Appendix B]{B00}, we have
\begin{equation}\label{eqn:Intermediate}
\det\!\big(f_{j}(z+y_{i})\big)_{i,j=1}^{s}
=
\sum_{0\le n_{1}<\cdots<n_{s}}
\frac{\det\!\big(f_{j}^{(n_{i})}(z)\big)_{i,j=1}^{s}}{n_{1}!\cdots n_{s}!}
\det\!\big(y_{i}^{n_{j}}\big)_{i,j=1}^{s}.
\end{equation}
We now look to write this as a sum over Young diagrams by reindexing this sum to be defined over the set of weakly decreasing partitions, in order to make use of standard tools of symmetric function theory. Given we are summing over strictly increasing sequences $0\le n_{1}<\cdots<n_{s}$, these may be uniquely reindexed by writing:
\begin{equation}
n_{i}=\lambda_{s+1-i}+i-1,
\end{equation}
for a partition $\lambda$ with at most $s$ parts. We can then form a strictly decreasing sequence by reversing the list,
\begin{equation}
\{\lambda_{i}+s-i\}_{i=1}^{s}.
\end{equation}
After re-indexing in this way, we obtain
\begin{equation}
\det\!\big(f_{j}^{(n_{i})}(z)\big)_{i,j=1}^{s}=(-1)^{\frac{s(s-1)}{2}}\det\!\big(f_{j}^{(\alpha_{i}(\lambda))}(z)\big)_{i,j=1}^{s},
\end{equation}
and
\begin{equation}
\det\!\big(y_{i}^{n_{j}}\big)_{i,j=1}^{s}
=
(-1)^{\frac{s(s-1)}{2}}\det\!\big(y_{i}^{\lambda_{j}+s-j}\big)_{i,j=1}^{s}
=
\Delta(\mathbf{y})\,s_{\lambda}(\mathbf{y}),
\end{equation}
where $s_{\lambda}$ is the Schur polynomial. Moreover, $n_{1}!\cdots n_{s}!=\lambda!_{(s)}$. Applying this to \eqref{eqn:Intermediate}, we have
\begin{equation}
\frac{\det\!\big(f_{j}(z+y_{i})\big)_{i,j=1}^{s}}{\Delta(\mathbf{y})}
=
(-1)^{\frac{s(s-1)}{2}}\sum_{\substack{\lambda\\ \ell(\lambda)\le s}}
\frac{\det\!\big(f_{j}^{(\alpha_{i}(\lambda))}(z)\big)_{i,j=1}^{s}}{\lambda!_{(s)}}
\,s_{\lambda}(\mathbf{y}).
\end{equation}
Given the generating function of interest takes derivatives and coalesces variables at zero, it is natural to write this operation in the basis of monomial symmetric functions. In particular, we use the monomial expansion of the Schur polynomial,
\begin{equation}
s_{\lambda}(\mathbf{y})
=
\sum_{\eta\vdash |\lambda|} K_{\lambda\eta}\,m_{\eta}(\mathbf{y}),
\end{equation}
where $m_{\eta}$ is the monomial symmetric function. Hence
\begin{equation}
\left.\mathcal{D}^{\mu}_{\mathbf{y}}s_{\lambda}(\mathbf{y})\right|_{\mathbf{y}=0}
=
\mu!\,K_{\lambda\mu}.
\end{equation}
Substituting this into the Schur expansion proves the result.
\end{proof}
We now state the following theorem for finite-$N$, which provides a generalisation of \cite[Theorem 1.5]{SW26}.
\begin{Theorem}\label{thm:finiteNmicro}
Let $\mu,\nu$ be lists of non-negative integers with $s$ entries, which parametrise the chosen moment. Then 
\begin{equation}
M_{\mu,\nu}(z,N)
=
\mu!\nu!
\sum_{\substack{\lambda\vdash |\mu|,\ \ell(\lambda)\le s\\
\rho\vdash |\nu|,\ \ell(\rho)\le s}}
\frac{K_{\lambda\mu}K_{\rho\nu}}{\lambda!_{(s)}\,\rho!_{(s)}}
z^{|\rho|-|\lambda|}
\det\!\Big(\big(w^{\alpha_i(\lambda)}K_{N,s}^{(\alpha_i(\lambda))}(w)\big)^{(\beta_j(\rho))}\Big)_{i,j=1}^s,
\end{equation}
where $w=|z|^2$ and
\begin{equation}
K_{N,s}(w)\coloneq \sum_{m=0}^{N+s-1}w^{m}.
\end{equation}
\end{Theorem}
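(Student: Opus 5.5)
The plan is to apply Lemma~\ref{lem:confluentkostka} twice to the exact finite-$N$ formula for the mixed moment of characteristic polynomials, once in the $\bm z$ variables and once in the $\bm w$ variables. Take $K=L=s$ in \eqref{eqn:NF41} (from \cite[Theorem 1]{akemann2003characteristic}). Its kernel is $\mathcal K_N(r)=\sum_{l=0}^{N+s-1}r^l=K_{N,s}(r)$, so
\begin{equation*}
M_{\mu,\nu}(z,N)=\mathcal D^{\nu}_{\bm w}\mathcal D^{\mu}_{\bm z}\frac{\det\big(K_{N,s}(z_iw_j)\big)_{i,j=1}^s}{\Delta(\bm z)\Delta(\bm w)}\bigg|_{\bm z=z,\ \bm w=\bar z}.
\end{equation*}
Both sides are polynomials in the variables, so no limits need to be exchanged. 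The $\bm w$-operator commutes with the $\bm z$-operator, so we may apply them one after the other.

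\emph{First step ($\bm z$ variables).} For fixed $\bm w$, set $f_j(x)=K_{N,s}(xw_j)$, so that $f_j^{(n)}(x)=w_j^{n}K_{N,s}^{(n)}(xw_j)$. Lemma~\ref{lem:confluentkostka} then gives
\begin{equation*}
\mathcal D^{\mu}_{\bm z}\frac{\det(f_j(z_i))}{\Delta(\bm z)}\bigg|_{\bm z=z}=(-1)^{\frac{s(s-1)}2}\mu!\sum_{\lambda}\frac{K_{\lambda\mu}}{\lambda!_{(s)}}\det\big(w_j^{\alpha_i(\lambda)}K_{N,s}^{(\alpha_i(\lambda))}(zw_j)\big)_{i,j=1}^s .
\end{equation*}

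\emph{Second step ($\bm w$ variables).} We still have to divide by $\Delta(\bm w)$ and apply $\mathcal D^{\nu}_{\bm w}$. Transpose the determinant, which leaves its value unchanged. It then has the form $\det(g_i(w_j))$ with $g_i(y)=y^{\alpha_i}K_{N,s}^{(\alpha_i)}(zy)$. Applying Lemma~\ref{lem:confluentkostka} again, now in $\bm w$ at the point $\bar z$, produces a second factor $(-1)^{s(s-1)/2}\nu!\sum_\rho K_{\rho\nu}/\rho!_{(s)}$ and the determinant $\det\big(g_i^{(\beta_j(\rho))}(\bar z)\big)$, up to a final transpose. The two signs cancel because $(-1)^{s(s-1)}=1$.

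\emph{Rewriting in terms of $w=|z|^2$.} Put $h_i(x)=x^{\alpha_i}K_{N,s}^{(\alpha_i)}(x)$. For $z\neq 0$ we have $g_i(y)=z^{-\alpha_i}h_i(zy)$, and hence $g_i^{(\beta_j)}(\bar z)=z^{\beta_j-\alpha_i}h_i^{(\beta_j)}(|z|^2)$. The factors $z^{-\alpha_i}$ pull out of the rows and the factors $z^{\beta_j}$ out of the columns. Since $\sum_j\beta_j-\sum_i\alpha_i=|\rho|-|\lambda|$, this yields exactly the factor $z^{|\rho|-|\lambda|}$ times $\det\big((w^{\alpha_i}K_{N,s}^{(\alpha_i)}(w))^{(\beta_j)}\big)$, as claimed.

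The case $z=0$ follows by continuity. The left-hand side is a polynomial in $z,\bar z$. On the right-hand side the negative powers of $z$ must cancel, because the expression equals the one obtained before the rescaling step, which was manifestly polynomial. I expect the main obstacle to be bookkeeping rather than anything conceptual. One must keep track of which index carries $\alpha$ and which carries $\beta$ when the determinant is transposed. One must also check that the derivative convention in Lemma~\ref{lem:confluentkostka}, namely row index for the derivative order and column index for the function, matches at both applications, so that the signs and the $z$-power come out exactly as stated.
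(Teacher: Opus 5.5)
Your proposal is correct and follows the paper's proof essentially step for step. You apply Lemma~\ref{lem:confluentkostka} twice to the exact Akemann--Vernizzi formula, first with $f_j(u)=K_{N,s}(uw_j)$ and then with $g_i(v)=v^{\alpha_i(\lambda)}K_{N,s}^{(\alpha_i(\lambda))}(zv)$, then rescale via $g_i(v)=z^{-\alpha_i}H_i(zv)$ to extract $z^{|\rho|-|\lambda|}$, and finish with a regularity argument at $z=0$. Your explicit handling of the transpose and of the cancelling signs $(-1)^{s(s-1)/2}$ just spells out bookkeeping that the paper leaves implicit.
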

\begin{proof}
We begin by applying Lemma~\ref{lem:confluentkostka} twice to the exact expression in \eqref{AV}. For the first confluent step, fix $w_{1},\dots,w_{s}$ and define
\begin{equation}
f_{j}(u)\coloneq K_{N,s}(uw_{j}).
\end{equation}
Then
\begin{equation}
f_{j}^{(r)}(u)=w_{j}^{r}K_{N,s}^{(r)}(uw_{j}).
\end{equation}
Hence Lemma~\ref{lem:confluentkostka} yields
\begin{equation}
\begin{aligned}
&\left.
\mathcal{D}^{\mu}_{\mathbf{z}}
\frac{\det\!\big(K_{N,s}(z_{i}w_{j})\big)_{i,j=1}^{s}}{\Delta(\mathbf{z})}
\right|_{z_{1}=\cdots=z_{s}=z}
\\&=
(-1)^{\frac{s(s-1)}{2}}\mu!\sum_{\substack{\lambda\vdash |\mu|\\ \ell(\lambda)\le s}}
\frac{K_{\lambda\mu}}{\lambda!_{(s)}}
\det\!\big(w_{j}^{\alpha_{i}(\lambda)}K_{N,s}^{(\alpha_{i}(\lambda))}(zw_{j})\big)_{i,j=1}^{s}.
\end{aligned}
\end{equation}
After division by $\Delta(\mathbf{w})$, we apply Lemma \ref{lem:confluentkostka} again, now with
\begin{equation}\label{eqn:gkernel}
g_{i}(v)\coloneq v^{\alpha_{i}(\lambda)}K_{N,s}^{(\alpha_{i}(\lambda))}(zv).
\end{equation}
We obtain
\begin{equation}
M_{\mu,\nu}(z,N)
=
\mu!\nu!
\sum_{\substack{\lambda\vdash |\mu|,\ \ell(\lambda)\le s\\
\rho\vdash |\nu|,\ \ell(\rho)\le s}}
\frac{K_{\lambda\mu}K_{\rho\nu}}{\lambda!_{(s)}\,\rho!_{(s)}}
\det\!\big(g_{i}^{(\beta_{j}(\rho))}(\overline{z})\big)_{i,j=1}^{s}.
\end{equation}
In order to simplify evaluating $g_{i}^{(\beta_{j}(\rho))}(\overline{z})$, it is useful to introduce
\begin{equation}
H_i(u)\coloneq u^{\alpha_i(\lambda)}K_{N,s}^{(\alpha_i(\lambda))}(u).
\end{equation}
In particular, it trivialises the derivatives
\begin{equation}
\begin{aligned}
g_{i}^{(\beta_{j}(\rho))}(\overline{z}) &=\frac{\partial^{\beta_j(\rho)}}{\partial \overline{z}^{\beta_j(\rho)}} \left( \overline{z}^{\alpha_i(\lambda)}\,K_{N,s}^{(\alpha_i(\lambda))}(|z|^2) \right)\\ &=\frac{\partial^{\beta_j(\rho)}}{\partial \overline{z}^{\beta_j(\rho)}} \left(z^{-\alpha_i(\lambda)}\,H_i(|z|^2) \right)
\\ 
&=z^{\beta_j(\rho)-\alpha_i(\lambda)}H_i^{(\beta_j(\rho))}(|z|^2),
\end{aligned}      
\end{equation}
where the second line follows from the definition of $H_i(z\overline{z})$ and the third line from the chain rule. Therefore, upon factoring terms from each row and column of the determinant, we have
\begin{equation}\label{eqn:someequation}
\det\Big(g_{i}^{(\beta_{j}(\rho))}(\overline{z})\Big)_{i,j=1}^s
=
z^{|\rho|-|\lambda|}
\det\!\Big(\big(w^{\alpha_i(\lambda)}K_{N,s}^{(\alpha_i(\lambda))}(w)\big)^{(\beta_j(\rho))}\Big)_{i,j=1}^s.
\end{equation}
Using \eqref{eqn:gkernel} it is straightforward to see the determinant \eqref{eqn:someequation} is regular at $z=0$. This completes the proof.
\end{proof}
\begin{Corollary}[\cite{SW26}, Theorem 1.5]
For any $z\in\mathbb{C}$ and any positive integers $N$ and $s$, one has the exact identity
\begin{equation}
\mathbb{E}\big[|\Lambda_N'(z)|^{2s}\big]
=
\sum_{\lambda,\rho\in Y_s}
\frac{f_\lambda f_\rho}{\lambda!\rho!}
\det\left\{
\big(w^{\lambda_i+s-i}K_{N,s}^{(\lambda_i+s-i)}(w)\big)^{(\rho_j+s-j)}
\right\}_{i,j=1}^s,
\end{equation}
where $w=|z|^2$, $K_{N,s}(w)=\sum_{j=0}^{N+s-1}w^j$, and  $Y_s$ denotes the set of partitions of $s$.
\end{Corollary}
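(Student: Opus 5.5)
This corollary should follow from Theorem \ref{thm:finiteNmicro} by specialising to $\mu=\nu=(1^{s})$, i.e.\ $s$ copies of the entry $1$. Then $M_{\mu,\nu}(z,N)=\mathbb{E}[|\Lambda_N'(z)|^{2s}]$, $\mu!=\nu!=1$, and $|\mu|=|\nu|=s$. The factor $z^{|\rho|-|\lambda|}$ therefore equals $1$, because both $\lambda$ and $\rho$ are partitions of $s$. The constraint $\ell(\lambda)\le s$ is automatic for $\lambda\vdash s$, so the index set is exactly $Y_s\times Y_s$. The exponents $\alpha_i(\lambda)=\lambda_i+s-i$ and $\beta_j(\rho)=\rho_j+s-j$ already match the determinant in the corollary.

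What remains is to identify the coefficients. I will first show that the Kostka number $K_{\lambda,(1^s)}$ equals $f_\lambda$, the number of standard Young tableaux of shape $\lambda$. A semistandard tableau with content $(1^s)$ uses each of $1,\dots,s$ exactly once, so it is precisely a standard tableau. Next I will check that the paper's $\lambda!_{(s)}=\prod_{i=1}^{s}(\lambda_i+s-i)!$ is what \cite{SW26} denotes $\lambda!$. If their $\lambda!$ is instead the ordinary $\prod_i\lambda_i!$, I would reconcile the two via the hook length formula, $f_\lambda/\prod_i(\lambda_i+s-i)! = \Delta$-type factor$/s!$. Since the determinant entries coincide verbatim, I expect the conventions simply agree, with their $\lambda!$ defined in the shifted way.

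The only real obstacle is this bookkeeping of notation against \cite{SW26}. There is no analytic content, since Theorem \ref{thm:finiteNmicro} is an exact finite-$N$ identity valid for all $z\in\mathbb{C}$.
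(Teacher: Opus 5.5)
Your proposal is correct and is the paper's argument: specialise Theorem \ref{thm:finiteNmicro} to $\mu=\nu=(1^s)$, and use $K_{\lambda,(1^s)}=f_\lambda$, $\mu!=\nu!=1$ and $z^{|\rho|-|\lambda|}=1$. You are right to read $\lambda!$ as $\lambda!_{(s)}=\prod_{i=1}^{s}(\lambda_i+s-i)!$, but the hook-length fallback would not work: the two conventions differ by a $\lambda$-dependent factor, and the version with ordinary $\prod_i\lambda_i!$ is already false at $z=0$, $s=2$ (it gives $13$ instead of $\mathbb{E}|\Lambda_N'(0)|^{4}=2$ when $N\ge 2$).
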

\begin{proof}
Starting from Theorem \ref{thm:finiteNmicro}, we set $\mu=\nu=(1^s)$ and observe that in this case the Kostka numbers reduce to $K_{\lambda,(1^s)}=f_\lambda$, where $f_{\lambda}$ is the number of standard tableaux of shape $\lambda$. We have $|\lambda|=|\rho|=s$, so the factor $z^{|\rho|-|\lambda|}$ is equal to $1$ and we have $\mu!=\nu!=1$.
\end{proof}

\begin{proof}[Proof of Theorem \ref{thm:micro2}]
Using \cite[Equation (5.25)]{SW26} we have
\begin{equation}
(u^{a}K^{(a)}_{N,s}(u))^{(b)}\sim N^{a+b+1}\int_0^1 x^{a+b}e^{-\tau x}\,dx.
\end{equation}
Hence, denoting $z_N=1-\frac{c}{N}$ and $w_N=|z_N|^2$,
\begin{equation}
\begin{split}
z_N^{|\rho|-|\lambda|}
\det\!&\Big(\big(u^{\alpha_i(\lambda)}K_{N,s}^{(\alpha_i(\lambda))}(u)\big)^{(\beta_j(\rho))}\Big)_{i,j=1}^s\Bigg|_{u = w_N}\\
&\sim N^{|\mu|+|\nu|+s^{2}}
\left(
\det\!\big(I_{\alpha_{i}(\lambda)+\beta_{j}(\rho)}(\tau)\big)_{i,j=1}^{s}
\right). \label{det-asymptotics}
\end{split}
\end{equation}
The result \eqref{climit} follows by substituting \eqref{det-asymptotics} into the finite-$N$ formula given in Theorem~\ref{thm:finiteNmicro}, while \eqref{c0limit} follows from \eqref{climit} and $I_{r}(0)=\frac{1}{r+1}$. Finally, we verify the strict positivity. By Andr\'eief's identity, the right-hand side of \eqref{climit} is
\begin{equation}
\frac{\mu!\nu!}{s!}\int_{[0,1]^{s}}\left(\prod_{j=1}^{s}dx_{j}\,e^{-\tau x_{j}}\right)\sum_{\lambda \vdash |\mu|, \ell(\lambda) \leq s}\frac{K_{\lambda \mu}}{\lambda!_{(s)}}s_{\lambda}(\bm x)\sum_{\rho \vdash |\nu|, \ell(\rho) \leq s}\frac{K_{\rho \nu}}{\rho!_{(s)}}s_{\rho}(\bm x)\Delta^{2}(\bm x).\label{sumschur}
\end{equation}
By Schur positivity and non-negativity of the Kostka numbers, we conclude that the integrand of \eqref{sumschur} is positive on a set of full measure and this verifies the strict positivity property.
\end{proof}

\section[Moments of higher order derivatives of the Riemann zeta function with sigma > 1/2]{Moments of higher order derivatives of the Riemann zeta function with \texorpdfstring{$\sigma > \frac{1}{2}$}{sigma > 1/2}}\label{section3}
The goal of this section is to establish Proposition \ref{thm:ZetaThmLindel} and Theorem \ref{thm:ZetaThms2}. We follow the approach described in the book by Titchmarsh \cite[Chapter 7]{titchmarsh} and \cite[Section 3]{SW26}. We recall that $\zeta(s)$ can be represented by its Dirichlet series or Euler product as
\begin{equation}
	\zeta(s) = \sum_{n=1}^{\infty}\frac{1}{n^{s}} = \prod_{p}\frac{1}{1-p^{-s}}, \qquad \mathrm{Re}(s) > 1. \label{zetadef}
\end{equation}
 When we take powers and products of $\zeta$ it will be useful to recall the identity, for sufficiently small complex numbers $\bm{\alpha} = (\alpha_1,\ldots,\alpha_K) \in \mathbb{C}^{K}$,
\begin{equation}
\prod_{j=1}^{K}\zeta(s+\alpha_j) = \sum_{m=1}^{\infty}\frac{\sigma_{\bm \alpha}(m)}{m^{s}}, \qquad \sigma_{\bm \alpha}(m) = \sum_{n_{1}\ldots n_{K}=m}n_{1}^{-\alpha_1}\ldots n_{K}^{-\alpha_{K}}, \label{prodzet}
\end{equation}
which follows from the Dirichlet series in \eqref{zetadef}. By differentiating the shifts, we immediately get
\begin{equation}
\prod_{j=1}^{K}\zeta^{(\mu_j)}(s) = (-1)^{|\mu|}\sum_{m=1}^{\infty}\frac{A_{\mu}(m)}{m^{s}}, \label{pro}
\end{equation}
where
\begin{equation}
	A_{\mu}(m) := (-1)^{|\mu|}\partial^{\mu}_{\bm \alpha}\sigma_{\bm \alpha}(m)|_{\bm \alpha = 0} = \sum_{n_1 \ldots n_{K}=m}\log(n_1)^{\mu_1}\ldots \log(n_K)^{\mu_K}. \label{defamu}
\end{equation}
Now we discuss the Lindel\"of hypothesis (LH) and its relation to the mean value \eqref{eqn:ZetaMoments}. Recall that LH is a conjectured bound on $\zeta$ of the form $\zeta(\sigma+it) = O(t^{\epsilon})$ as $t \to \infty$ for $\sigma > \frac{1}{2}$ and any $\epsilon>0$. By analyticity and applying Cauchy's theorem we get the same bound for any order of the derivative, i.e.\ $\zeta^{(m)}(\sigma+it) = O(t^{\epsilon})$ as $t \to \infty$ for any $m \in \mathbb{N}$. As explained in \cite{titchmarsh}, LH is closely related to mean values of the Riemann zeta function off the critical line. In particular, \cite[Chapter 13]{titchmarsh} shows that LH is equivalent to the mean value
\begin{equation}
	\lim_{T \to \infty}\frac{1}{T}\int_{1}^{T}|\zeta(\sigma+it)|^{2K}dt = \sum_{n=1}^{\infty}\frac{d^{2}_{K}(n)}{n^{2\sigma}}, \qquad \sigma > \frac{1}{2},
\end{equation}
holding for all $K \in \mathbb{N}$. Here, $d_{K}(n)$ is the $K^{\mathrm{th}}$ divisor function of $n$. This result is known for $K=1,2$, but not for $K>2$ without additional restrictions on $\sigma$.
\begin{lemma}\label{lem:LogConvolution}
	 Assuming the Lindel\"of hypothesis, for any $\sigma>\frac{1}{2}$,
	\begin{equation}
		\lim_{T\rightarrow \infty} \frac{1}{T} \int_{1}^{T}
		\prod_{i=1}^{K}\zeta^{(\mu_i)}(\sigma+it)
		\prod_{j=1}^{L} \overline{\zeta^{(\nu_j)}(\sigma+it)}dt=(-1)^{|\mu|+|\nu|}\sum_{m=1}^{\infty} \frac{A_{\mu}(m)A_{\nu}(m)}{m^{2\sigma}}. \label{al}
	\end{equation} 
\end{lemma}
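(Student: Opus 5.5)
We follow the classical Titchmarsh route (Chapter 7 and Chapter 13): approximate the product of derivatives by a Dirichlet polynomial, then apply a mean value theorem for Dirichlet polynomials.

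By \eqref{pro}, the product $F(s):=\prod_{i}\zeta^{(\mu_i)}(s)$ has Dirichlet series $(-1)^{|\mu|}\sum_m A_\mu(m)m^{-s}$ for $\mathrm{Re}(s)>1$. The same holds for $G(s):=\prod_j\zeta^{(\nu_j)}(s)$ with coefficients $A_\nu$. These coefficients satisfy $A_\mu(m)\ll_\epsilon m^{\epsilon}$, since $A_\mu(m)\le d_K(m)(\log m)^{|\mu|}$.

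The first step is an approximate formula. For $\sigma>\tfrac12$ and $t\in[T,2T]$, we want
\[
F(\sigma+it)=(-1)^{|\mu|}\sum_{m\le T^{\eta}}A_\mu(m)m^{-\sigma-it}+O(T^{-\delta})
\]
for some small $\eta>0$ and some $\delta=\delta(\sigma,\eta)>0$. To get this, insert the smoothing weight $e^{-m/X}$ with $X=T^{\eta}$ and use the Mellin representation
\[
\sum_m A_\mu(m)m^{-s}e^{-m/X}=\frac{1}{2\pi i}\int_{(2)}\Gamma(w)X^w(-1)^{|\mu|}F(s+w)\,dw .
\]
Then shift the contour to $\mathrm{Re}(s+w)=\tfrac12+\epsilon'$. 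The residue at $w=0$ gives $F(s)$. The pole at $w=1-s$ comes from the derivatives of the pole of $\zeta$, and it is killed by the exponential decay of $\Gamma(1-s)$ since $|t|\asymp T$. On the shifted line, the Lindel\"of hypothesis gives $F(s+w)\ll (|t|+|w|)^{\epsilon}$; here the bound on derivatives follows from Cauchy's estimate, as noted before the lemma. Combined with the $\Gamma$ decay, this makes the shifted integral $\ll T^{\epsilon}X^{\frac12+\epsilon'-\sigma}=O(T^{-\delta})$ once $\epsilon$ is chosen small relative to $\eta(\sigma-\tfrac12)$. Finally, remove the smoothing weight, which costs $O(T^{-\delta})$ because the coefficients are $m^{\epsilon}$ and $\sigma>\tfrac12$. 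Alternatively, keep the weights throughout, which changes nothing in the limit. Do the same for $G$.

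The second step handles the product. Write $F=P+E_1$ and $\overline{G}=\overline{Q}+\overline{E_2}$, where $P,Q$ are the Dirichlet polynomials. Then $F\overline G=P\overline Q+O(T^{\epsilon-\delta})$, again using LH to bound $F$ and $G$ by $T^{\epsilon}$. The Montgomery--Vaughan mean value theorem gives
\[
\int_T^{2T}P\overline{Q}\,dt=T\sum_{m\le T^\eta}\frac{A_\mu(m)A_\nu(m)}{m^{2\sigma}}+O\Big(\sum_m m\,|a_m|^2+|b_m|^2\Big),
\]
with $a_m=A_\mu(m)m^{-\sigma}$ and $b_m=A_\nu(m)m^{-\sigma}$. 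The error term is $O(T^{\eta(2-2\sigma)+\epsilon})=o(T)$ for $\eta<1$. Alternatively, integrate $m^{-it}n^{it}$ directly and bound the off-diagonal terms by $\sum_{m\ne n}|a_mb_n|/|\log(m/n)|$, which is $O(T^{2\eta+\epsilon})$, so $\eta<\tfrac12$ suffices. The diagonal sum converges absolutely to $\sum_{m\ge1}A_\mu(m)A_\nu(m)m^{-2\sigma}$, since $2\sigma>1$ and the coefficients are $m^{\epsilon}$; the tail beyond $T^\eta$ tends to zero. Summing over dyadic intervals $[T/2^{k+1},T/2^k]$ and noting that the part $t\le T^{1/2}$ is negligible by LH yields \eqref{al}, with the sign $(-1)^{|\mu|+|\nu|}$ coming from \eqref{pro} together with the fact that the $A$'s are real.

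The main obstacle is the first step: making the contour shift uniform in $t$ with a power saving. This is exactly where LH is used, to control $F(s+w)$ on a line to the left of $\mathrm{Re}=\sigma$. One must choose $\epsilon'$ and $\eta$ so that $X^{\frac12+\epsilon'-\sigma}T^{\epsilon}$ decays. This is routine once LH is granted, but it must be done with care for the derivatives and for the polar term at $w=1-s$.
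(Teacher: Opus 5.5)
Your proof is correct and follows essentially the paper's route: the paper transplants Titchmarsh's proof of Theorem 7.9 (Mellin smoothing with $\Gamma$, shifting the contour to the left of $\sigma$, then a mean value theorem for the smoothed Dirichlet series). The main difference is that Titchmarsh bounds the shifted integral in mean square on a line $\mathrm{Re}(w)=\beta<\sigma$ (the hypothesis $\sigma>\sigma_k$, which LH supplies via Chapter 13), whereas you use the pointwise Lindel\"of bound, which makes the role of LH explicit and handles the cross term $F\overline{G}$ with $K\neq L$ directly. One slip: removing the smoothing weight is not justified by the size of the coefficients, because $\sum_{m\le X}|A_\mu(m)|m^{-\sigma}(1-e^{-m/X})$ is of order $X^{1-\sigma}$ up to logarithms, which grows for $\sigma<1$; rely instead on your alternative of keeping the weights (Montgomery--Vaughan holds for infinite series with $\sum_m m|a_m|^2<\infty$), or obtain the sharp cutoff from a truncated Perron formula under LH.
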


\begin{proof}
This follows identically to \cite[Proof of Theorem 7.9]{titchmarsh} where we replace $\zeta^{K}(s)$ in the proof with $\prod_{j=1}^{K}\zeta^{(\mu_j)}(s)$ or $\prod_{k=1}^{L}\zeta^{(\nu_k)}(s)$, using the Dirichlet series \eqref{pro} and the bound $|A_{\mu}(m)| \leq \log(m)^{|\mu|}d_{K}(m)$.
\end{proof}
The next step is to compute the limit as $\sigma \to \frac{1}{2}$ on the right-hand side of \eqref{al}. 
\begin{lemma}
	\label{lem:zetid}
We have the identity
\begin{equation}\label{factorH}
\begin{aligned}(-1)^{|\mu|+|\nu|}&\sum_{m=1}^{\infty}\frac{A_{\mu}(m)A_{\nu}(m)}{m^{2\sigma}} \\&= \partial^{\mu}_{\bm \alpha}\partial^{\nu}_{\bm \beta}\left(H_{\mu,\nu}(\bm \alpha, \bm \beta,\sigma)\prod_{i=1}^{K}\prod_{j=1}^{L}\zeta(2\sigma+\alpha_{i}+\beta_{j})\right)\bigg{|}_{\bm \alpha=\bm \beta=0}, 
\end{aligned}
\end{equation}
where $H_{\mu,\nu}(\bm \alpha, \bm \beta, \sigma)$ is an analytic function of its parameters for any $\sigma > \frac{1}{4}$ and $\bm \alpha$, $\bm \beta$ in a neighbourhood of zero. We also have
\begin{equation}
\lim_{\sigma \to \frac{1}{2}}H_{\mu,\nu}(\bm 0, \bm 0,\sigma) = a_{\mu,\nu}, \label{limH}
\end{equation}
where $a_{\mu,\nu}$ is the arithmetic constant \eqref{eqn:generalarithmetic}.
\end{lemma}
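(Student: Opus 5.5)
We prove Lemma \ref{lem:zetid} with shifts, in the style of Conrey--Farmer--Keating--Rubinstein--Snaith, so that the derivatives $\partial^{\mu}_{\bm\alpha}\partial^{\nu}_{\bm\beta}$ can be pulled outside a shifted Dirichlet series. By \eqref{defamu}, $(-1)^{|\mu|}A_\mu(m)=\partial^\mu_{\bm\alpha}\sigma_{\bm\alpha}(m)|_{\bm\alpha=0}$, and similarly for $\nu$ with shifts $\bm\beta$. For $\sigma>\frac12$ and small shifts we interchange differentiation and summation, which is allowed by locally uniform absolute convergence using $|\sigma_{\bm\alpha}(m)|\le d_K(m)m^{\epsilon}$. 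This gives
\begin{equation*}
(-1)^{|\mu|+|\nu|}\sum_{m}\frac{A_\mu(m)A_\nu(m)}{m^{2\sigma}}=\partial^{\mu}_{\bm\alpha}\partial^{\nu}_{\bm\beta}\,F(\bm\alpha,\bm\beta,\sigma)\Big|_{\bm 0},\qquad F:=\sum_{m=1}^\infty\frac{\sigma_{\bm\alpha}(m)\sigma_{\bm\beta}(m)}{m^{2\sigma}}.
\end{equation*}
It then suffices to define $H_{\mu,\nu}:=F\prod_{i,j}\zeta(2\sigma+\alpha_i+\beta_j)^{-1}$ (in fact it depends only on $K,L$) and show that it has the stated analyticity.

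\textbf{Euler product.} The coefficients $\sigma_{\bm\alpha}(m)$ are multiplicative, so $F=\prod_p F_p$ with
\begin{equation*}
F_p=\sum_{k\ge0}\sigma_{\bm\alpha}(p^k)\sigma_{\bm\beta}(p^k)p^{-2k\sigma},
\end{equation*}
where $\sigma_{\bm\alpha}(p)=\sum_i p^{-\alpha_i}$. Then $H=\prod_p F_p\prod_{i,j}(1-p^{-2\sigma-\alpha_i-\beta_j})$. Expanding in powers of $x=p^{-2\sigma}$, the linear term is
\begin{equation*}
\sigma_{\bm\alpha}(p)\sigma_{\bm\beta}(p)-\sum_{i,j}p^{-\alpha_i-\beta_j}=0.
\end{equation*}
Hence each local factor is $1+O(p^{-4\sigma+\epsilon})$, with coefficients of $x^k$ bounded by $C^k p^{2k\epsilon}$ and with $\epsilon$ controlled by the size of the shifts. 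For $\sigma>\frac14$ and shifts in a small enough neighbourhood of $0$ we have $4\sigma-\epsilon>1$, so the product converges absolutely and locally uniformly. Each factor is analytic, so $H$ is analytic there. This also shows that $F$ continues analytically to the region where the zeta factors do, which is what the later $\sigma\to\frac12$ analysis needs.

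\textbf{Main obstacle: the limit \eqref{limH}.} The delicate point is the uniform control of the tail of the Euler product as $\sigma\downarrow\frac12$. It follows from the bound on the local factors just established, which holds uniformly for $\sigma$ in compact subsets of $(\frac14,\infty)$. We therefore obtain continuity of $H(\bm0,\bm0,\sigma)$ at $\sigma=\frac12$, and
\begin{equation*}
H(\bm0,\bm0,\tfrac12)=\prod_p(1-p^{-1})^{KL}\sum_m d_K(p^m)d_L(p^m)p^{-m}.
\end{equation*}
Finally, $d_K(p^m)=\binom{K+m-1}{m}=\frac{\Gamma(K+m)}{m!\,\Gamma(K)}$, which gives exactly $a_{\mu,\nu}$ in \eqref{eqn:generalarithmetic}.
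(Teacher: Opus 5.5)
Your proposal is correct and follows essentially the same route as the paper. Like the paper, you reduce to the shifted identity for $\sum_m \sigma_{\bm\alpha}(m)\sigma_{\bm\beta}(m)m^{-2\sigma}$, write both sides as Euler products, and use the cancellation of the linear $p^{-2\sigma}$ terms to get analyticity of $H_{\mu,\nu}$ for $\sigma>\frac14$. You then evaluate at $\sigma=\frac12$ using $d_K(p^m)=\binom{K+m-1}{m}$.

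One small remark: your $C^k p^{2k\epsilon}$ coefficient bound is slightly crude for the finitely many small primes when $\sigma$ is near $\frac14$. The actual growth is polynomial in $k$, so each local factor is analytic and nothing breaks. Also, since $H_{\mu,\nu}$ is analytic at $\sigma=\frac12$, the limit is simply evaluation rather than a genuine obstacle.
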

\begin{proof}
To prove \eqref{factorH}, note that by definition \eqref{defamu}, it suffices to show
\begin{equation}
\sum_{m=1}^{\infty}\frac{\sigma_{\bm \alpha}(m)\sigma_{\bm \beta}(m)}{m^{2\sigma}} = H_{\mu,\nu}(\bm \alpha, \bm \beta,\sigma)\prod_{i=1}^{K}\prod_{j=1}^{L}\zeta(2\sigma+\alpha_{i}+\beta_{j}). \label{suff}
\end{equation}
Using the Euler product for Dirichlet series, the left-hand side of \eqref{suff} is
\begin{equation}\label{eqn:HDefinition}
	\begin{aligned}
		&\prod_{p}\left(1+\sum_{n=1}^{\infty}\frac{1}{p^{2n\sigma}}\left(\sum_{n_1 \dots n_{K}=p^n}n_1^{-\alpha_1} \dots n_{K}^{-\alpha_{K}}\right)\left(\sum_{m_1\dots m_{L}=p^n} m_1^{-\beta_1}\dots m_{L}^{-\beta_{L}}\right)
		\right)\\
		&=\prod_{p} \left( 1 + \sum_{n=1}^{\infty} \frac{1}{p^{2n\sigma}} 
		\sum_{\substack{a_1+\dots+a_{K}=n \\ b_1+\dots+b_{L}=n \\ a_i, b_j \geq 0}} 
		\frac{1}{p^{a_1 \alpha_1} \dots p^{a_{K} \alpha_{K}} p^{b_1 \beta_1} \dots p^{b_{L} \beta_{L}}} \right).
	\end{aligned}
\end{equation}
Consider the expansion,
\begin{equation}
	\begin{aligned}
		&\prod_{i=1}^{K} \prod_{j=1}^{L} \zeta(2\sigma +\alpha_i+\beta_j)^{-1}=\prod_p \exp\left(-\sum_{n=1}^{\infty}\sum_{i=1}^{K} \sum_{j=1}^{L} 
		\frac{1}{n}p^{-n(2\sigma +\alpha_i +\beta_j)} \right)
		\\ &=\prod_p\left(1-\sum_{i=1}^{K} \sum_{j=1}^{L}p^{-2\sigma-\alpha_i-\beta_j}+O\left(p^{-4\sigma-4\min_{i,j}\{\Re(\alpha_i),\Re(\beta_j) \}}\right) \right). \label{expanzet}
	\end{aligned}
\end{equation}
By definition, $H_{\mu,\nu}(\bm \alpha, \bm \beta,\sigma)$ is obtained by multiplying \eqref{eqn:HDefinition} and \eqref{expanzet}, so the terms of order $p^{-2\sigma-\alpha_i-\beta_j}$ in \eqref{expanzet} cancel with the $n=1$ term in \eqref{eqn:HDefinition}. This shows that $H_{\mu,\nu}(\bm \alpha, \bm \beta,\sigma)$ is analytic in all its variables for $\sigma > \frac{1}{4}$ and $\bm \alpha$, $\bm \beta$ in a neighbourhood of zero. Regarding the limit \eqref{limH}, we use the Euler product \eqref{zetadef} to write
\begin{equation}
	\lim_{\sigma \to \frac{1}{2}}H_{\mu,\nu}(\bm 0, \bm 0, \sigma) = \prod_{p}\left(1-p^{-1}\right)^{KL}\left(\sum_{m=0}^{\infty}\frac{d_{K}(p^{m})d_{L}(p^{m})}{p^{m}}\right).
\end{equation}
Then \eqref{limH} follows after recalling the identity $d_{K}(p^{m}) = \frac{(K+m-1)!}{m!(K-1)!}$.
\end{proof}
Now we can prove the conditional part of Proposition \ref{thm:ZetaThmLindel}.
\begin{proof}[Proof of \eqref{eqn:ZetaMoments} assuming LH]
We insert the result of Lemma \ref{lem:zetid} into Lemma \ref{lem:LogConvolution}. The leading order contribution as $\sigma \to \frac{1}{2}$ comes from applying the derivatives only to the product of terms of the form $\zeta(2\sigma+\alpha_{i}+\beta_{j})$, due to analyticity of $H_{\mu,\nu}(\bm \alpha, \bm \beta, \sigma)$. Hence, the left-hand side of \eqref{eqn:ZetaMoments} is asymptotic to 
\begin{equation}
a_{\mu,\nu}\mathcal{D}^{\nu}_{\bm{\beta}}\mathcal{D}^{\mu}_{\bm{\alpha}}\left(\prod_{i=1}^{K}\prod_{j=1}^{L} \zeta(2\sigma + \alpha_i + \beta_j)
			\right)\Bigg|_{\bm \alpha = \bm \beta = \bm 0}, \label{eqn:ZetaLogic}
\end{equation}
as $\sigma \to \frac{1}{2}$. We now compute the above coefficient. Taylor expanding near the origin, for $\sigma>\frac{1}{2}$ we have 
	\begin{equation}
		\zeta(2\sigma + \alpha_i+\beta_j) = \sum_{Q_{ij}=0}^{\infty}\sum_{R_{ij}=0}^{\infty}\frac{\zeta^{(R_{ij}+Q_{ij})}(2\sigma)}{R_{ij}!Q_{ij}!}\alpha_{i}^{Q_{ij}}\beta_{j}^{R_{ij}}. \label{zetataylor}
	\end{equation}
	Inserting \eqref{zetataylor} into the left-hand side of \eqref{eqn:ZetaLogic}, the derivatives at $0$ fix the row and column sums of $Q$ and $R$, leading to
	\begin{equation}\label{eqn:algebra1}
		\mathcal{D}^{\nu}_{\bm{\beta}}\mathcal{D}^{\mu}_{\bm{\alpha}}\left(\prod_{i=1}^{K}\prod_{j=1}^{L} \zeta(2\sigma + \alpha_i + \beta_j) 
		\right)\Bigg|_{\bm \alpha = \bm \beta = \bm 0} =
		\mu!\nu!\sum_{\substack{Q \in \mathcal{M}_{\mu,\cdot} \\ R \in \mathcal{M}_{\cdot,\nu}}} \prod_{i=1}^{K}  \prod_{j=1}^{L}\frac{\zeta^{(R_{ij} + Q_{ij})}(2\sigma)}{R_{ij}! Q_{ij}!}.
	\end{equation}
	The result then follows from inserting the following into \eqref{eqn:algebra1},
	\begin{equation}
		\zeta^{(k)}(2\sigma)\sim \frac{(-1)^kk!}{(2\sigma-1)^{k+1}} \quad \text{as} \quad \sigma\rightarrow \frac{1}{2}. \label{zet1}
	\end{equation}
\end{proof}
Finally, we complete the proof of the unconditional part of Proposition \ref{thm:ZetaThmLindel} by taking the limit $\sigma \to \frac{1}{2}$ in the next Lemma. This also completes the proof of Theorem \ref{thm:ZetaThms2} by the same calculations given in the proof of Lemma \ref{lem:len2} combined with \eqref{eqn:algebra1} and \eqref{zet1}.
\begin{lemma}\label{lemma2}
Let $\mu,\nu$ be lists of non-negative integers with $K=\ell(\mu)\leq 2$ and $L=\ell(\nu)\leq 2$. Then, for $\alpha_1,\alpha_2,\beta_1,\beta_2$ sufficiently small complex numbers and $\sigma>\frac{1}{2}$ fixed, we have:
\begin{equation}\label{eqn:secondmomentgen}
\lim_{T\rightarrow \infty} \frac{1}{T} \int_{1}^{T}
\prod_{i=1}^{K}\zeta^{(\mu_i)}(\sigma+it)
\prod_{j=1}^{L} \overline{\zeta^{(\nu_j)}(\sigma+it)}\,dt =\mathcal{D}^{\nu}_{\bm{\beta}}\mathcal{D}^{\mu}_{\bm{\alpha}} \; \mathcal{G}(\bm{\alpha},\bm{\beta},\sigma)|_{\bm \alpha=\bm \beta = 0},
\end{equation}
where
\begin{equation}\label{eq:Kmunu}
\mathcal{G}(\bm{\alpha},\bm{\beta},\sigma)
=
\begin{cases}
\displaystyle \zeta(2\sigma+\alpha_1+\beta_1),
& K=1,\; L=1,\\[4mm]
\displaystyle \zeta(2\sigma+\alpha_1+\beta_1)\zeta(2\sigma+\alpha_1+\beta_2),
& K=1,\; L=2,\\[4mm]
\displaystyle \zeta(2\sigma+\alpha_1+\beta_1)\zeta(2\sigma+\alpha_2+\beta_1),
& K=2,\; L=1,\\[4mm]
\displaystyle \frac{\prod_{i,j=1}^2\zeta(2\sigma+\alpha_i+\beta_j)}
{\zeta(4\sigma+\alpha_1+\alpha_2+\beta_1+\beta_2)},
& K=2,\; L=2.
\end{cases}
\end{equation}
\end{lemma}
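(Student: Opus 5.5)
The plan is to prove the shifted identity first, without derivatives, and then differentiate. For $\bm\alpha,\bm\beta$ small and $\sigma>\frac12$ fixed, consider
\begin{equation*}
F_T(\bm\alpha,\bm\beta):=\frac{1}{T}\int_1^T\prod_{i=1}^{K}\zeta(\sigma+\alpha_i+it)\prod_{j=1}^{L}\zeta(\sigma+\overline{\beta_j}-it)\,dt .
\end{equation*}
This is chosen so that differentiating in $\alpha_i$ (and in $\beta_j$) produces $\zeta^{(\mu_i)}$ (and $\overline{\zeta^{(\nu_j)}}$). We will show $F_T\to \mathcal{G}(\bm\alpha,\bm\beta,\sigma)$ locally uniformly in a small polydisc around $0$. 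Each $F_T$ is holomorphic in $(\bm\alpha,\overline{\bm\beta})$, so by the Weierstrass theorem (Cauchy's formula in several variables) the derivatives $\mathcal{D}^{\mu}_{\bm\alpha}\mathcal{D}^{\nu}_{\bm\beta}$ commute with the limit. This gives \eqref{eqn:secondmomentgen}.

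To identify the limit, split by the value of $K+L$.

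\textbf{Case $K+L\le 3$.} Here the integrand is a product of at most three zeta factors, with at most two on one side. Following \cite[Theorem 7.2 and 7.9]{titchmarsh}, I will use the unconditional mean values
\begin{equation*}
\int_1^T|\zeta(\sigma'+it)|^2\,dt=O(T), \qquad \int_1^T|\zeta(\sigma'+it)|^4\,dt=O(T), \qquad \sigma'>\tfrac12 .
\end{equation*}
I will also use the Dirichlet series \eqref{prodzet}, truncated at length $T$ via an approximate functional equation or a smoothed Perron formula. Write $\prod_i\zeta(\sigma+\alpha_i+it)=\sum_{m\le T}\sigma_{\bm\alpha}(m)m^{-\sigma-it}+E_1(t)$, and similarly for the conjugate side. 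The diagonal of the mean-value theorem for Dirichlet polynomials (Montgomery--Vaughan) then gives $\sum_m \sigma_{\bm\alpha}(m)\sigma_{\bm\beta}(m)m^{-2\sigma}$, with an off-diagonal error of $o(1)$ because $\sigma>\frac12$. The remainders $E_1,E_2$ are controlled in $L^2$ by the second and fourth moment bounds, combined through Cauchy--Schwarz. When one side has a single factor this is simpler, since $\sum_{m=n_1n_2}$ collapses. The resulting Dirichlet series are evaluated by Ramanujan-type identities:
\begin{equation*}
\sum_m\frac{\sigma_{\alpha_1}(m)\,\sigma_{\beta_1,\beta_2}(m)}{m^{2\sigma}}=\zeta(2\sigma+\alpha_1+\beta_1)\,\zeta(2\sigma+\alpha_1+\beta_2).
\end{equation*}
This holds because $\sigma_{\alpha_1}(m)=m^{-\alpha_1}$ and $\sum_m\sigma_{\beta_1,\beta_2}(m)m^{-s}=\zeta(s+\beta_1)\zeta(s+\beta_2)$. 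The case $K=L=1$ is the same with one factor.

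\textbf{Case $K=L=2$.} The Dirichlet series is
\begin{equation*}
\sum_m \frac{\sigma_{\alpha_1,\alpha_2}(m)\,\sigma_{\beta_1,\beta_2}(m)}{m^{2\sigma}}.
\end{equation*}
Ramanujan's classical identity, checked prime by prime via the Euler product as in the proof of Lemma \ref{lem:zetid}, evaluates it to $\prod_{i,j}\zeta(2\sigma+\alpha_i+\beta_j)/\zeta(4\sigma+\alpha_1+\alpha_2+\beta_1+\beta_2)$. This is exactly $\mathcal{G}$, with $H$ now explicit.

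The main obstacle is the analytic step for $K=L=2$: justifying the twisted fourth moment off the line unconditionally and uniformly in the shifts. I would adapt Titchmarsh's Theorem 7.5/7.9 argument, which proves the mean of $|\zeta(\sigma+it)|^4$ for $\sigma>\frac12$ using the approximate functional equation for $\zeta^2$ together with Hilbert's inequality. The shifts only perturb the coefficients from $d(n)$ to $\sigma_{\alpha_1,\alpha_2}(n)$, which satisfy $|\sigma_{\bm\alpha}(n)|\le d(n)n^{\epsilon}$ for small shifts. The diagonal contribution is therefore absolutely convergent and the off-diagonal is $o(T)$, uniformly on the polydisc. Finally, uniform convergence on compacts yields the derivative statement.
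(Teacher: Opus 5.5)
Your proposal is correct in outline, but it is organised differently from the paper's proof.

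\textbf{The paper's route.} The paper differentiates first. For $K+L\le 3$ it applies the mean-value theorem for Dirichlet series directly to the differentiated series. For $K=L=2$ it proceeds as follows:
\begin{enumerate}
\item it differentiates the approximate functional equation of $\zeta$ (with $x=y=\sqrt{t/2\pi}$) termwise via Cauchy's theorem;
\item it multiplies the four expansions and evaluates the diagonal $n_1n_2=n_3n_4$ with the Euler-product identity of \cite{SW26};
\item it asserts that the off-diagonal terms and the differentiated error terms are handled as in Titchmarsh's Theorem 7.5, up to logarithmic factors.
\end{enumerate}

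\textbf{Your route.} You instead prove the undifferentiated shifted mean value locally uniformly on a polydisc and recover the derivatives by the Weierstrass theorem. This removes the need to track differentiated error terms and makes uniformity in the shifts explicit. Your analytic input for $K+L\le 3$ is essentially the mechanism behind Titchmarsh's Theorem 7.9: a smoothed Perron formula, a contour shift to a line $\frac12<\sigma'<\sigma$, $L^2$ control of the tail by moment bounds, and Montgomery--Vaughan for the diagonal. It needs only $\int_1^T|\zeta(\sigma'+it)|^{2K}dt=O(T)$ and $\int_1^T|\zeta(\sigma'+it)|^{2L}dt=O(T)$ for $\sigma'>\frac12$.

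\textbf{Consequence for $K=L=2$.} Since these are exactly the unconditional second and fourth moment bounds when $K,L\le 2$, the same argument already covers $K=L=2$ verbatim. The ``main obstacle'' you single out is therefore not a separate difficulty, and the separate adaptation of Titchmarsh's Theorem 7.5 with shifted coefficients is unnecessary. Your route also makes transparent that Lemma \ref{lem:LogConvolution} is unconditional precisely when these two moment bounds are available. The paper's route, by contrast, exhibits the arithmetic diagonal $n_1n_2=n_3n_4$ explicitly and matches the setup of \cite{SW26}.

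\textbf{A small slip.} Define $F_T$ with $\zeta(\sigma+\beta_j-it)$ rather than $\zeta(\sigma+\overline{\beta_j}-it)$. Then $F_T$ is holomorphic in $\bm\beta$, its $\beta_j$-derivatives at $0$ give $\zeta^{(\nu_j)}(\sigma-it)=\overline{\zeta^{(\nu_j)}(\sigma+it)}$, and its limit is $\mathcal{G}(\bm\alpha,\bm\beta,\sigma)$. As written, you would obtain $\mathcal{G}(\bm\alpha,\overline{\bm\beta},\sigma)$ and would need $\partial_{\overline{\beta}}$-derivatives instead.
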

\begin{proof}
The case $K=L=1$ follows
directly from the mean-value theorem for Dirichlet series \cite[Theorem 7.1]{titchmarsh} that for $\sigma>\frac{1}{2}$,
\begin{equation}
\lim_{T \to \infty} \frac{1}{T} \int_1^T \zeta^{(\mu_1)}(\sigma + it) \overline{\zeta^{(\nu_1)}(\sigma + it)}dt =
\sum_{n=1}^{\infty}\frac{\left(-\log(n)\right)^{\mu_1}\left(-\log(n)\right)^{\nu_1}}{n^{2\sigma}}.
\end{equation}
Hence for $\alpha_1, \beta_1$ sufficiently small complex numbers and $\mu_1,\nu_1\in \mathbb{Z}_{\geq 0}$ we see
\begin{equation}
\begin{aligned}
\lim_{T \to \infty} \frac{1}{T} \int_1^T \zeta^{(\mu_1)}(\sigma + it) \overline{\zeta^{(\nu_1)}(\sigma + it)}dt  =
\frac{\partial^{\mu_1}}{\partial \alpha_1^{\mu_1}} \frac{\partial^{\nu_1}}{\partial \beta_1^{\nu_1}} 
\zeta(2\sigma+\alpha_1+\beta_1)
\Bigg
|_{\alpha_1=\beta_1=0},
\end{aligned}
\end{equation}
as required. Without loss of generality, suppose $K=1$ and $L=2$.  In particular, consider the two Dirichlet series:
\begin{equation}
\begin{aligned}
&\zeta^{(\mu_1)}(\sigma + it) = \sum_{n=1}^{\infty} \frac{(-\log(n))^{\mu_1}}{n^{\sigma+it}},
\\
&\overline{\zeta^{(\nu_1)}(\sigma + it)\zeta^{(\nu_2)}(\sigma + it)}= \sum_{n=1}^{\infty} \frac{1}{n^{\overline{s}}}\left(\sum_{m_1m_2=n} (-\log(m_1))^{\nu_1}(-\log(m_2))^{\nu_2}\right).
\end{aligned}
\end{equation}
Using the mean-value theorem for Dirichlet series, and taking $\Re(2\sigma +\alpha_i +\beta_j)>1$ , we conclude that: 
\begin{equation}
\begin{aligned}
&\lim_{T \to \infty} \frac{1}{T} \int_1^T \zeta^{(\mu_1)}(\sigma + it) \overline{\zeta^{(\nu_1)}(\sigma + it)}\overline{\zeta^{(\nu_2)}(\sigma + it)}dt\\
&=\frac{\partial^{\mu_1}}{\partial \alpha_1^{\mu_1}}\frac{\partial^{\nu_1}}{\partial \beta_1^{\nu_1}}\frac{\partial^{\nu_2}}{\partial \beta_2^{\nu_2}} \left(\sum_{n =1}^\infty \frac{1}{n^{2\sigma+\alpha_1}} \left(  \sum_{m_1m_2=n} \frac{1}{m_1^{\beta_1} m_2^{\beta_2}}\right) \right)\Bigg
|_{\bm{\alpha}=\bm{\beta}=0}
\\ 
&=\mathcal{D}_{\bm{\beta}}^\nu  \,\mathcal{D}_{\bm{\alpha}}^\mu  \, \left(
\sum_{m_1,m_2=1}^{\infty} \frac{1}{m_1^{2\sigma+\alpha_1+\beta_1}m_2^{2\sigma+\alpha_1+\beta_2}}
\right)\Bigg
|_{\bm{\alpha}=\bm{\beta}= 0}.
\end{aligned}
\end{equation}
The result follows by writing the above Dirichlet series in terms of the zeta function. Finally, consider the case $K=L=2$. We follow the approach outlined in  \cite[Theorem 7.5]{titchmarsh} and \cite[Lemma 3.1]{SW26}. We start with the approximate functional equation, 
\begin{equation}
\zeta(s)
= \sum_{n \leq x} n^{-s}+
\chi(s)\sum_{n \leq y} n^{s - 1}
+
O\bigl(x^{-\Re(s)}+ |t|^{\frac{1}{2} - \Re(s)}y^{\Re(s)-1}\bigr).
\end{equation}
where $\chi(s) = 2^{s}\pi^{s-1}\sin(\pi s/2)\Gamma(1-s)$. For $s=\sigma+it$, this holds uniformly for $x,y \geq 1$ and $xy=\frac{t}{2\pi}$ and $0 <\Re(s)< 1$. As seen in \cite{conrey4}, using Cauchy's theorem we have,
\begin{equation}
\begin{aligned}
\frac{d^k}{ds^k}\zeta(s)
&= \frac{d^k}{ds^k}\left(\sum_{n \leq x} n^{-s}\right)+
\frac{d^k}{ds^k}\left(\chi(s)\sum_{n \leq y} n^{s - 1}\right)
\\
&+O\left(\left(x^{-\Re(s)}+ |t|^{\frac{1}{2} - \Re(s)}y^{\Re(s)-1}\right)\log(|t|)^k\right)
\\
&=Z_1 +Z_2+O(t^{-\Re(s)/2}\log(|t|)^{k}),
\end{aligned}
\end{equation}
where we set $x=y=\sqrt{\frac{t}{2\pi}}$. In order to evaluate moments of the form (\ref{eqn:secondmomentgen}), we denote the relevant terms,
\begin{equation}
	\begin{aligned}
		\begin{aligned}
			\frac{d^{\mu_1}}{ds^{\mu_1}}\zeta(s) &= A_1 + A_2 + R_A, \qquad &
			\frac{d^{\nu_1}}{ds^{\nu_1}}\overline{\zeta(s)} &= \overline{C_1} + \overline{C_2} + \overline{R_C}, \\
			\frac{d^{\mu_2}}{ds^{\mu_2}}\zeta(s) &= B_1 + B_2 + R_B, \qquad &
			\frac{d^{\nu_2}}{ds^{\nu_2}}\overline{\zeta(s)} &= \overline{D_1} + \overline{D_2} + \overline{R_D},
		\end{aligned}
	\end{aligned}
\end{equation}
where the contributions $R$ denote the respective remainders. The contribution from the leading term is
	\begin{equation}\label{eqn:Z1Asympt}
		\partial^{\mu}_{\bm \alpha}\partial^{\nu}_{\bm \beta}\frac{1}{T}\int_{1}^T\sum_{n_1,n_2,n_3,n_4\leq \sqrt{\frac{t}{2\pi}}}n_1^{-s-\alpha_1}n_2^{-s-\alpha_2}n_3^{-\overline{s}-\beta_1}n_4^{-\overline{s}-\beta_2}\Bigg|_{\bm{\alpha}=\bm{\beta}=0}.
	\end{equation}
Taking the diagonal contribution $n_{1}n_{2}=n_{3}n_{4}$, we obtain the right-hand side given by \eqref{eq:Kmunu} due to the following identity
\begin{equation}
	\begin{split}
&\sum_{n_1 n_{2} = n_{3}n_{4}}n_1^{-s-\alpha_1}n_2^{-s-\alpha_2}n_3^{-\overline{s}-\beta_1}n_4^{-\overline{s}-\beta_2}=\frac{\prod_{i,j=1}^{2}\zeta(2\sigma+\alpha_i+\beta_j)}{\zeta(4\sigma+\alpha_1+\alpha_{2}+\beta_{1}+\beta_{2})},
\end{split}
\end{equation}
which is proved in \cite[Section 3]{SW26} by expanding the left-hand side using an Euler product and summing the resulting geometric series. The off-diagonal terms are bounded in the same way as \cite[Theorem 7.5]{titchmarsh} and furthermore the effect of differentiating the error terms that arise there does not change their magnitude except for logarithmic factors.
\end{proof}

\bibliographystyle{plain}
\bibliography{bibliography}

\end{document}